\theoremstyle{definition}
\newtheorem{theorem}{Theorem}
\newtheorem{lemma}[theorem]{Lemma}
\newtheorem{proposition}[theorem]{Proposition}
\newcommand{\ie}{\emph{i.e.}}
\newcommand{\eg}{\emph{e.g. }}
\newcommand{\Uplus}{U^{+}}
\newcommand{\Uminus}{U^{-}}
\newcommand{\U}{\mathcal{U}}
\renewcommand{\S}{\mathcal{S}}
\newcommand{\F}{\mathbb{F}}
\newcommand{\Fq}{\mathbb{F}_q}
\newcommand{\sand}{\;\land\;}
\newcommand{\Pjt}{P_{j,t}}
\DeclareMathOperator{\polylog}{polylog}
\newcommand{\upStage}{\emph{Up-Stage}\xspace}
\newcommand{\downStage}{\emph{Down-Stage}\xspace}
\newcommand{\recursiveShares}{\emph{RecursiveShares}\xspace}
\begin{document}

\title{Scalable Mechanisms for Rational Secret Sharing}

\author{Varsha Dani \thanks{Department of Computer Science,  
University of New Mexico,  Albuquerque,  NM 87131-1386;
email: {\tt \{varsha, movahedi, saia\}@cs.unm.edu}. 
This research was partially supported by NSF CAREER Award 0644058,
NSF CCR-0313160, and an AFOSR MURI grant.} 
\and Mahnush Movahedi \footnotemark[1] \and Jared Saia \footnotemark[1]}

\date{}

% The correct dates will be entered by the editor
\maketitle
\begin{abstract}

We consider the classical secret sharing problem in the case where all
agents are selfish but rational. In recent work, Kol and Naor show
that, when there are two players, in the non-simultaneous communication model, i.e.\ when rushing is
possible, there is no Nash equilibrium that ensures both players learn
the secret.  However, they describe a mechanism for this problem, for any number of players, that
is an \emph{$\epsilon$-Nash equilibrium}, in that no player can gain more than $\epsilon$
utility by deviating from it. Unfortunately, the Kol and Naor
mechanism, and, to the best of our knowledge, all previous mechanisms
for this problem require each agent to send $O(n)$ messages in
expectation, where $n$ is the number of agents.  This may be
problematic for some applications of rational secret sharing such as
secure multi-party computation and simulation of a mediator.

We address this issue by describing mechanisms for rational
 secret sharing that are designed for large $n$.  Both of our results hold for $n \geq 3$, and are Nash equilbria, rather than just $\epsilon$-Nash equilbria.
 Our first result is a mechanism for $n$-out-of-$n$ rational secret sharing that is \emph{scalable} in the sense that it requires each agent to
send only an expected $O(\log n)$ bits.  Moreover, the latency of this
mechanism is $O(\log n)$ in expectation, compared to $O(n)$ expected
latency for the Kol and Naor result.  Our second result is a mechanism for a
relaxed variant of rational $m$-out-of-$n$ secret sharing where $m =
\Theta(n)$.  It requires each processor to send $O(\log n)$ bits and
has $O(\log n)$ latency.  Both of our mechanisms are non-cryptographic, and
are not susceptible to backwards induction.
\begin{flushright}
\emph{``Three can keep a secret if two of them are dead.''}\\ -
Benjamin Franklin
\end{flushright}
\end{abstract}

\section{Introduction}
Secret sharing is one of the most fundamental problems in security,
and is an important primitive in many cryptographic protocols,
including secure multiparty computation.  Recently, there has been
interest in solving \emph{rational secret
  sharing}~\cite{kol2008games, gordon2006rational, halpern2004rational, 
abraham2006distributed,lysyanskaya2006rationality}.
In this setting, there are $n$ selfish but rational agents, and we
want to distribute shares of a secret to each agent, and design a
protocol for the agents ensures that: (1) if any group of $m$ agents
follow the protocol they will all learn the secret; and (2) knowledge
of less than $m$ of the shares reveals nothing about the secret.
Moreover, we want our protocol to be a \emph{Nash equilibrium} in the
sense that no player can improve their utility by deviating from the
protocol, given that all other players are following the protocol.

Unfortunately, all previous solutions to this problem require each
agent to send $O(n)$ messages in expectation, and so do not scale to
large networks.  Rational secret sharing is a primitive for rational
multiparty computation, which can be used to compute an arbitrary
function in a completely decentralized manner, without a trusted
external party.  A typical application of rational multiparty
computation might be to either run an auction, or to hold a lottery to
assign resources in a network.  It is easy to imagine such
applications where the number of players is large, and where it is
important to have algorithms whose bandwidth and latency costs scale
well with the number of players.  Moreover, in a game theoretic
setting, standard tricks to circumvent scalability issues, like
running the protocol only on a small subset of the players, may be
undesirable since they could lead to increased likelihood of bribery
attacks.

In this paper, we address this issue by designing scalable mechanisms
for rational secret sharing.  Our main result is a protocol for
rational $n$-out-of-$n$ secret sharing that (1) requires each agent to
send only an expected $O(\log n)$ bits; and (2) has $O(\log n)$ expected
latency.  We also design scalable mechanisms for a relaxed variant of
$m$-out-of-$n$ rational secret sharing in the case where $m$ is
$\Theta(n)$. We note however that we pay for these improvements by requiring 
the payers to send $O(\log n)$ rather than a constant number of bits per round.

\subsection{The Problem}

Shares of a secret are to be dealt to $n$ rational but selfish
players, who will later reconstruct the secret from the shares. The
players are \emph{learning-preferring,} in the sense that each player
prefers every outcome in which he learns the secret to any outcome in
which he does not learn the secret.  We note that in some previous
work~\cite{kol2008games,abraham2006distributed} it is further assumed
that the players are \emph{competitive}: they prefer that others do
not learn secret. However, this assumption is used mainly for the
purpose of proving lower bounds, and is omitted in the upper bounds.
We will \emph{not} make this additional assumption.

The secret is an arbitrary element of a large (fixed) finite field
$\Fq$.  At the beginning of the game, a dealer provides the shares to
the players.  The dealer has no further role in the game. The players
must then communicate with each other in order to recover the secret.

Communication between the players is \emph{point-to-point} and through
secure private channels. In other words, if player A sends a message
to player B, then a third player C is not privy to the message that
was sent, or indeed even to the fact of a message having been sent.
Communication is \emph{synchronous} in that there is an upper-bound
known on the maximum amount of time required to send a message from
one player to another.  However, we assume \emph{non-simultaneous}
communication, and thus allow for the possibility of \emph{rushing},
where a player may receive messages from other players in a round
before sending out his own messages.

Our goal is to provide protocols for the dealer and rational players
such that the players following the protocol can reconstruct the
secret.  Moreover, we want a protocol that is \emph{scalable} in the
sense that the amount of communication and the latency of the protocol
should be a slow growing function of the number of players.

\subsection{Related Work}

Since its introduction by Halpern and Teague
in~\cite{halpern2004rational}, there has been significant work on the
problem of rational secret sharing, including results of Halpern and
Teague~\cite{halpern2004rational}, Gordon and
Katz~\cite{gordon2006rational}, Abraham et
al.~\cite{abraham2006distributed}, Lysyanskaya and
Triandopoulos~\cite{lysyanskaya2006rationality} and Kol and
Naor~\cite{kol2008games}.  All of this related work except
for~\cite{kol2008games}, assumes the existence of simultaneous
communication, either by broadcast or private channels.  Several of
the protocols proposed~\cite{gordon2006rational,
  abraham2006distributed, lysyanskaya2006rationality} make use of
cryptographic assumptions and achieve equilibria under the assumption
that the players are computationally bounded.  The protocol
from~\cite{abraham2006distributed} is robust to coalitions; and the
protocol from~\cite{lysyanskaya2006rationality} works in the situation
where players may be either rational or adversarial.

The work of Kol and Naor~\cite{kol2008games} is closest to our own
work.  They show that in the non-simultaneous broadcast model (\ie,
when rushing is possible), there is no Nash equilibrium that ensures
all agents learn the secret, at least for the case of two
players. They thus consider and solve the problem of designing an
$\epsilon$-Nash equilibrium for the problem in this communication
model.  An $\epsilon$-Nash equilibrium is close to an equilibrium in
the sense that no player can gain more than $\epsilon$ utility by
unilaterally deviating from it. Furthermore, the equilibrium they
achieve is \emph{everlasting} in the sense that after any history that
is consistent with all players following the protocol, following the
protocol continues to be an $\epsilon$-Nash equilibrium.  As we have
already discussed, our protocols make use of several clever ideas from
their result.

The impossibility of a Nash equilibrium for two players carries over
to the setting with secure private channels, since there is no
difference between private channels and broadcast channels when there
are only two players. However, one might hope that the algorithm of
Kol and Naor~\cite{kol2008games} could be simulated over secure
private channels to give an everlasting $\epsilon$-Nash
equilibrium. Unfortunately, simulation of broadcast over private
channels is expensive, requiring each player to send $\Theta(n)$
messages per round.

In \cite{RSSpodc11} we overcame this difficulty, providing a
\emph{scalable} algorithm for rational secret sharing, in which each
player only sends $O(1)$ bits per round and the expected number of
rounds is constant (although each round takes $O(\log n)$
time). Moreover, following the protocol is an $\epsilon$-Nash
equilibrium. Unfortunately, a certain bad event with small but
constant probability caused some players, when they recognized it, to
deviate from the protocol so that the equilibrium is not everlasting.
This paper is the full version of \cite{RSSpodc11}.  However, we
improve on the work in \cite{RSSpodc11} in two ways.  First, we remove
all probability of error for $n$-out-of-$n$ secret sharing, and
improve the probability of error for $m$-out-of-$n$ from a constant to
an inverse polynomial.  Second, we show that our new protocol is a
Nash equilibrium, not just an $\epsilon$-Nash equilibrium, as long as
$n \geq 3$.

\subsection{Our Results}

The main result of this paper is presented as Theorem~\ref{thm:main}.
This theorem builds on work from our extended abstract
in~\cite{RSSpodc11}.  It also improves on this result in two ways.
First, our new protocol eliminates the probability of failure when
compared with the protocol in~\cite{RSSpodc11}.  Second, our new
protocol has the added advantage of being a Nash equilibrium, not
merely an $\epsilon$-Nash equilibrium.
 
\begin{theorem}\label{thm:main}
Let $n\ge 3$. There exists a protocol for rational $n$-out-of-$n$
secret sharing with the following properties.
\begin{itemize}
\item The protocol is an everlasting Nash equilibrium in which all
  players learn the secret.
\item The protocol, in expectation, requires each player to send
  $O(\log n)$ bits, and has latency $O(\log n)$.
\end{itemize}
\label{thm:nofn}
\end{theorem}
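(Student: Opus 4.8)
The plan is to combine the ``list of candidate secrets'' idea of Kol and Naor with a recursive, tree-structured sharing that keeps per-player work at $O(\log n)$, and then to use the assumptions $n\ge 3$ and private channels to turn the resulting $\epsilon$-Nash equilibrium into an exact one. The dealer samples a ``definitive iteration'' $i^{*}$ from a geometric distribution and, for every iteration $i$, prepares an additive sharing of a candidate value $c_i\in\Fq$ together with an indicator $d_i$, so that $c_{i^{*}}$ equals the true secret $s$, every other $c_i$ is uniform and independent, and the bit revealing whether iteration $i$ was definitive only becomes reconstructible one iteration later. Each $c_i$ and $d_i$ is recursively split down a fixed balanced binary tree whose leaves are the players (the \recursiveShares procedure): a node's value is additively split between its two children, so each player holds only the leaf shares on its root-to-leaf path and, over the whole protocol, takes part in just $O(\log n)$ node reconstructions. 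Reconstruction runs in iterations, each consisting of an \upStage that sums leaf shares up the tree and a \downStage that passes the result back down; hence one iteration has latency $O(\log n)$ and costs each player a constant number of field elements per tree level on its root-path.

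I would first dispatch the easy claims: if everyone follows the protocol the \upStage and \downStage computations are exact, so in iteration $i^{*}+1$ all players learn $d_{i^{*}}=1$, output the $c_{i^{*}}=s$ computed earlier, and halt; since $i^{*}$ is geometric there are $O(1)$ iterations in expectation, giving each player $O(\log n)$ expected bits and $O(\log n)$ expected latency. Perfect secrecy against fewer than $n$ shares, and the fact that a received transcript reveals nothing about $c_i$ until $d_i$ is reconstructed, follow from the additive structure of the sharing.

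The heart of the proof, and the step I expect to be the main obstacle, is showing that following the protocol is an exact best response under rushing, where a deviating player $P$ may wait for all incoming messages of a (sub-)round before deciding what to send. I would split the argument by deviation type. First, if $P$ ever sends a malformed or inconsistent message, the $n-1\ge 2$ honest players catch it over their private channels (using information-theoretic redundancy built into the tree sharing, so nothing cryptographic is needed) in the iteration of the deviation, hence before any player --- including $P$ --- learns whether a candidate equals $s$; the honest players then halt, and the crucial structural property, achievable precisely because $n\ge 3$ and matching the Kol--Naor impossibility for two players, is that deciding whether a candidate equals $s$ requires a round-trip of messages through honest players and cannot be performed by $P$ from the messages it merely receives, so along any such deviation $P$ fails to learn $s$, strictly worse than complying. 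Second, for the ``rush and then withhold'' deviation, the same property shows that $P$ never reaches a state in which it has learned $s$ while the honest players have not yet committed to the \downStage that also reveals $s$ to them: withholding is either too early (so $P$ has not yet learned $s$, and the honest halt is triggered) or too late (so everyone has already learned $s$), and in neither case is it a strict gain; hence complying is a best response.

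Finally I would argue the equilibrium is everlasting and immune to backwards induction: because $i^{*}$ is geometric, after any history consistent with all players complying the conditional distribution of the remaining iterations is unchanged, so the continuation game is isomorphic to the original and the best-response argument applies verbatim, and in particular there is no last iteration from which to unravel the equilibrium. The delicate points I anticipate are (a) fixing the exact message schedule inside an iteration so that the honest players' detection of any deviation provably precedes $P$'s point of no return for every rushing strategy, (b) building just enough redundancy into the tree sharing to make a lone deviator's corruption detectable while keeping communication at $O(\log n)$, and (c) assigning internal tree nodes to players so that a single deviator can neither impersonate nor frame an honest player; all three are handled using the private channels together with $n\ge 3$.
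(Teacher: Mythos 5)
Your high-level architecture (geometric definitive iteration, recursive tree sharing for $O(\log n)$ cost, halt-on-detection, memorylessness for the everlasting property) matches the paper's, but two ideas that carry the entire equilibrium argument are missing, and your substitutes for them do not work. First, detection: a node of the tree receives exactly two shares and interpolates them; there is \emph{no} redundancy in a $2$-out-of-$2$ split, so a deviator who replaces his outgoing share by an arbitrary field element is undetectable by the recipient, and your claim that the $n-1$ honest players catch any malformed message ``using information-theoretic redundancy built into the tree sharing'' is false as stated (cross-checking among honest players over private channels would cost $\Omega(n)$ messages). The paper instead attaches an information-theoretic tag-and-hash authentication to every message (Proposition~\ref{prop:hash}), accepts a $\frac{1}{q-1}$ forgery probability, and then proves the stronger structural fact that \emph{even a successful forgery cannot change any player's outcome} --- because the only players who ever want to forge at the end of the game are the ``short'' players sitting just above the leaves, and the lone long player below a short player will still learn the secret next round when everyone else has quit. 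That structural fact is precisely what upgrades the Kol--Naor $\epsilon$-Nash guarantee to an exact Nash equilibrium; nothing in your proposal plays this role.

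Second, your explicit indicator bit $d_i$, reconstructible one iteration after $c_i$, reintroduces the problem it is meant to solve: whoever sits at the root when $d_{i^*}$ is reconstructed learns that the game is over while already holding $c_{i^*}=s$, and more importantly you never bound any player's \emph{posterior} probability that the current iteration is definitive, nor do you say how much input each player receives (equal-length inputs invite backwards induction; unequal lengths leak $i^*$). The paper resolves this with the short/long player device --- players receive inputs of two different random lengths $X$ and $X+Y$, are re-randomized over tree positions every round, and the end-of-game signal is the short players' \emph{inability} to continue (they lack \downStage tags), not an explicit bit. The quantitative core is Lemma~\ref{lem:guess}, which shows that conditioned on everything a player sees (input length, node parities, unmasked positional data) his estimate that the current round is definitive is at most $2\beta$; combined with the threshold criterion of Lemma~\ref{lem:threshold} and the choice $\beta = \frac{|\S|-\U}{4\,\U|\S|}$ under the standing assumption $\U < |\S|$, this yields the Nash inequality. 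Your proposal contains no analogue of this calculation, and the role you assign to $n\ge 3$ (enabling honest majority detection) is not the one it actually plays (guaranteeing at least two short players so that no single short player's forgery can deprive anyone of the secret).
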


We also consider the problem of $m$-out-of-$n$ rational secret sharing
for the case where $m<n$.  Designing scalable algorithms for this
problem is challenging because of the tension between reduced
communication, and the need to ensure that \emph{any} active set of
$m$ players can reconstruct the secret.  For example, consider the
case where each player sends $O(\log n)$ messages.  If $m = o(n/ \log
n)$, even if the set of active players is chosen \emph{randomly}, it
is likely that there will be some active player that will never
receive a message from any other active player.  Moreover, even if
$m=\Theta(n)$, if the set of active players is chosen in a worst case
manner, it is easy to see that a small subset of the active players
can easily be isolated so that they never receive messages from the
other active players, and are thus unable to reconstruct the secret.

Despite the difficulty of the problem, scalable rational secret
sharing for the $m$-out-of-$n$ case may still be of interest for
applications like the Vanish peer-to-peer
system~\cite{geambasu2009vanish}.  To determine what might at least be
possible, we consider a significantly relaxed variant of the problem.
In particular, we require $m = \Theta(n)$ and that the set of $m$
active players be chosen independently of the random bits of the
dealer.  In this setting we prove the following.

\begin{theorem}
Let $n \geq 3$.  For any fixed positive $k, \lambda$, and threshold
$\tau$, there exists a protocol for rational secret sharing with
absent players, which with probability at least $1-\frac1{n^k}$ has
the following properties, provided that the subset of $m$ active
players is chosen independently of the random bits of the dealer:
\begin{itemize}
\item The protocol is a Nash equilibrium.
\item The protocol ensures that if at least a $(\tau + \lambda)$
  fraction of the players are active, (i.e. $m/n \ge \tau +
  \lambda$\,) then all active players will learn the secret; and if
  less than a $(\tau - \lambda)$ fraction of the players are active,
  (i.e. $m/n \le \tau -\lambda$\,) then the secret can not be
  recovered
\item The protocol requires each player to send $O(\log^2 n)$ bits,
  and has latency $O(\log n)$
\end{itemize}
\label{thm:mofn}
\end{theorem}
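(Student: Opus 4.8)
The plan is to reduce $m$-out-of-$n$ rational secret sharing, for the given fixed constants $k,\lambda,\tau$, to a polynomial collection of independent small instances of the scalable mechanism of Theorem~\ref{thm:main}, wired together by a single information-theoretic threshold whose purpose is to manufacture the sharp $\tau\pm\lambda$ dichotomy. In the dealer's setup I would: (i) choose $\Theta(n)$ committees $S_1,\dots,S_N\subseteq[n]$, each an independent uniformly random subset of size $\ell=\Theta(k\log n/\lambda^2)$, so that each player lies in $O(\log n)$ of them; (ii) for every committee $j$, using fresh independent randomness, hand its members a $\lceil\tau\ell\rceil$-out-of-$|S_j|$ Shamir secret sharing of the secret $s$, to be reconstructed inside the committee by a scalable rational mechanism on its $\Theta(\log n)$ members in the style of Theorem~\ref{thm:main}; and (iii) fix a random graph $H$ on $[n]$ of degree $\Theta(\log n)$ along which a recovered secret will be flooded. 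Each active player participates in the committee mechanism for each of its committees that has at least $\lceil\tau\ell\rceil$ active members (detected by a preliminary round of presence pings), and forwards $s$ along its $H$-edges as soon as it holds it.

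The analysis would then have four parts. \emph{Sharp threshold:} since the active set $A$ is fixed independently of the dealer's coins, for each committee the count $|A\cap S_j|$ is hypergeometric with mean $(|A|/n)\,\ell$, so a Chernoff bound together with $\ell=\Theta(k\log n/\lambda^2)$ shows that a committee has at least $\lceil\tau\ell\rceil$ active members except with probability $n^{-(k+2)}$ when $|A|\ge(\tau+\lambda)n$, and has fewer than $\lceil\tau\ell\rceil$ except with probability $n^{-(k+2)}$ when $|A|\le(\tau-\lambda)n$; a union bound over the committees then gives, with probability at least $1-n^{-k}$, that at least one committee is ``unlocked'' in the first regime and none is in the second. \emph{Hiding:} when $|A|\le(\tau-\lambda)n$, from every committee the active players jointly hold fewer than $\lceil\tau\ell\rceil$ Shamir shares of $s$ and hence learn nothing about $s$; independence of the per-committee randomness makes their entire joint view independent of $s$, and since this argument is information-theoretic it is robust to arbitrary deviation and collusion by those players. \emph{All active players learn:} when $|A|\ge(\tau+\lambda)n$, each unlocked committee reconstructs $s$ by the guarantee of Theorem~\ref{thm:main}, and flooding along $H$ delivers $s$ to all of $A$ within $O(\log n)$ rounds, because $m=\Theta(n)$ forces $|A|=\Theta(n)$, so the subgraph of $H$ induced on $A$ is with high probability connected with diameter $O(\log n)$. \emph{Equilibrium:} since players are learning-preferring and not competitive, a player who already holds $s$ gains nothing by withholding it and a player who does not cannot learn it any sooner by deviating; inside a committee the real-round indistinguishability inherited from Theorem~\ref{thm:main} removes any incentive to stall or abort; and as many committees are unlocked with high probability, no single deviation can prevent the honest outcome. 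Hence following the protocol is a best response --- a Nash equilibrium --- and the communication and latency bounds follow by summing the per-committee costs over the $O(\log n)$ committees each player joins, plus the $O(\log n)$ flooding rounds.

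The step I expect to be the main obstacle is reconciling the sharp $\tau\pm\lambda$ gap with the $O(\log^2 n)$-bit, $O(\log n)$-latency budget. The gap forces committees of size $\Theta(\log n)$ --- constant-size committees separate the two activity regimes only by a constant factor, not with the polynomially small failure probability one needs --- and this in turn couples $\ell$, $N$, and the per-player load, so the within-committee mechanism itself must be scalable. The most delicate point is that this inner mechanism is a \emph{threshold} ($\lceil\tau\ell\rceil$-out-of-$\ell$) mechanism that must be a genuine Nash equilibrium rather than merely $\epsilon$-Nash; the natural remedy is to apply the construction recursively, $n\to\Theta(\log n)\to\cdots$, bottoming out at committees of constant size (at least three players), and to verify that the per-level overheads telescope to the claimed bounds. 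A secondary obstacle is formalising the equilibrium argument against cross-committee ``rushing'' deviations, in which a player extracts what it can from several committees while stalling others; ruling these out requires the real-round indistinguishability of the inner mechanism to degrade gracefully under such interleaving, and requires the flooding phase to open no new profitable deviation.
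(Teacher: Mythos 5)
Your concentration argument is sound and essentially parallel to the paper's: the paper also fixes committee size $c\log n$ with $c=2(k+1)/\lambda^2$, shows via a tail bound for sampling without replacement (Azuma--Hoeffding in the paper, hypergeometric Chernoff in yours) that each group's active count deviates from its mean $(m/n)c\log n$ by $\lambda c\log n$ with probability $n^{-(k+1)}$, and union-bounds over groups. But there is a genuine gap at the heart of your construction: the ``inner mechanism.'' You hand each committee a $\lceil\tau\ell\rceil$-out-of-$\ell$ Shamir sharing of the \emph{true} secret and ask the committee to reconstruct it via ``a scalable rational mechanism in the style of Theorem~\ref{thm:main}.'' Theorem~\ref{thm:main} is an $n$-out-of-$n$ mechanism; what you need inside a committee is a \emph{threshold} rational reconstruction tolerating absent members, i.e.\ exactly the problem the theorem is trying to solve, at scale $\Theta(\log n)$. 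You flag this yourself and propose to recurse, but the recursion has no base case: at constant committee size a $\tau\pm\lambda$ separation cannot be realized with polynomially small error, and a constant-size threshold rational reconstruction with rushing is precisely the open difficulty. Moreover, because each committee holds a full threshold sharing of the real secret from round one, there are no fake rounds at the committee level unless the inner mechanism re-manufactures them, so the entire burden of creating uncertainty about the definitive round --- which is what Lemma~\ref{lem:guess} provides and what the equilibrium argument actually rests on --- is pushed onto the unspecified inner protocol.

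The paper avoids this circularity by a different decomposition. It \emph{partitions} the players into $Q=n/(c\log n)$ disjoint groups and runs a single global game identical to the $n$-out-of-$n$ mechanism on the $Q$-leaf tree, with groups as supernodes: each round's (mostly fake) secret $s_t$ and mask $m_{t+1}$ are encoded as $Q$-out-of-$Q$ iterated shares over the tree, and only the \emph{leaf values} of those iterated shares are further split by $\tau c\log n$-out-of-$c\log n$ Shamir sharing among a group's members. The threshold layer is thus a one-shot, non-strategic interpolation folded into the up-stage of an already-analyzed game --- there is no incentive problem inside a group, because by Lemma~\ref{lem:shares} a leaf value of round $t$'s iterated shares carries no information about $s_t$, let alone the true secret. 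Secrecy in the regime $m\le(\tau-\lambda)n$ also comes out differently: it follows from the $Q$-out-of-$Q$ property (one group below threshold makes one leaf undecodable, hence the root undecodable, Lemma~\ref{lem:lriffg} and Lemma~\ref{lem:shares}), rather than from per-committee Shamir hiding of the true secret. This is why the paper needs no flooding graph, no recursion, and no new equilibrium analysis beyond the remark that the $n$-out-of-$n$ proof carries over. To repair your proposal you would either have to exhibit the missing base-case threshold mechanism, or restructure so that, as in the paper, the only threshold reconstruction ever performed is of values that reveal nothing a player could exploit.
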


This is an improvement to the $\Theta(n)$-out-of-$n$ result we proved
in \cite{RSSpodc11}, in the sense that the probability of error
in~\cite{RSSpodc11} is a small constant, but here it is $1/n^{k}$.
However, we cannot completely eliminate the probability of failure.

\subsection{Our Approach}

The difficulty in designing a Nash equilibrium in a communication
model where rushing is possible, is that the last player to send out
his share has no incentive to actually do so.  He already has the
shares of all the other players and can recover the secret alone. To
get around this, it is common (see \cite{halpern2004rational,
  abraham2006distributed, gordon2006rational,
  lysyanskaya2006rationality, kol2008games}) for the protocol to have
a number of fake rounds designed to catch cheaters. The uncertainty in
knowing which is the ``definitive'' round, during which the true
secret will be revealed causes players to cooperate.

In the work of Kol and Naor~\cite{kol2008games} this uncertainty is
created by dealing one player only enough data to play until the round
preceding the definitive one.  Thus, there is a single ``short''
player and $n-1$ ``long'' players.  None of the players know whether
they are short or long.  The long players must broadcast their
information every round, since they cannot predict the definitive
round in advance. The short player knows the definitive round in
advance, but has no information about the secret. In the definitive
round the short player is the last to speak so that he (and all the
other players) receives the shares of all the long players and can
recover the secret. His failure to broadcast a message is what cues
the other players to the end of the game, and they too can recover the
secret. Moreover, having learned the secret, the short player cannot
pretend that he actually had a share for that round as the messages
sent by all the players are verified by a tag and hash scheme (see,
e.g., ~\cite{wegman1981new,rabin1989verifiable, kol2008games}). In
fact, it is the small but positive chance of cracking the tag and hash
scheme that results in this being an $\epsilon$-Nash equilibrium
rather than a Nash equilibrium.

Here, we also use short and long players. However we introduce two
novel techniques to ensure scalable communication and to ensure a Nash
equilibrium.  The first technique is to arrange players at the leaves
and nodes of a complete binary tree, and require that the players only
communicate with their neighbors in the tree. The assignment of
players to the leaves is independently random every round, and their
assignment to internal nodes is related to their assignment to leaves
by a labeling of the tree that is common knowledge.  Every round of
the game, information travels up to the root where it is decoded and
then travels back down again to the leaves. The short players are the
parents of the leaves in the definitive round, so that now about half
the players are short players.

The second main idea is that we make use of an iterated secret sharing
scheme over this tree in order to divide up shares of secrets among
the players.  This scheme is similar to that used in recent work by
King and Saia~\cite{king2010breaking} on the problem of scalable
Byzantine agreement, and suggests a deeper connection between the two
problems.

As in previous works~\cite{wegman1981new,rabin1989verifiable,
  kol2008games} we use a tag-and-hash scheme to ensure that players
cannot forge messages in the protocol. We note however, that unlike in
previous work, our use of the verification scheme is such that even by
breaking it, players who have learned the secret cannot prevent other
players from learning it as well. Thus, in our case the small
probability of forging messages without detection does \emph{not}
translate into the protocol being an $\epsilon$-Nash equilibrium.
Instead we show that our protocol is a Nash equilibrium for all the
players.

\subsection{Paper Organization}

The rest of this paper is laid out as follows.  In
Section~\ref{sec:prelim}, we give notation and preliminaries.  In
Section~\ref{s:algnofn}, we describe our algorithm for scalable
$n$-out-of-$n$ secret sharing.  In Section~\ref{sec:analysis}, we
analyze this algorithm; the main result of this section is a proof of
Theorem~\ref{thm:nofn}.  In Section~\ref{s:mofn}, we give our
algorithm and analysis for scalable $m$-out-of-$n$ secret sharing
where $m = \theta(n)$; the main result of this section is a proof of 
Theorem~\ref{thm:mofn}.  Finally in Section~\ref{s:conclusion}, 
we conclude and give directions for future work.

\section{Notation and Preliminaries}\label{sec:prelim}

The secret to be shared is an arbitrary element of a set $\S$. 
There are $n$ players with distinct player IDs in $[n] = \{1, 2, \dots n\}$. 
During the course of the algorithm, we will want to do arithmetic 
manipulations with player IDs and shares of the secret, including adding 
in, or multiplying by random elements to preserve secrecy. In order to be 
able to do these sorts of manipulations, we embed the sets $\S$ 
and $[n]$ into a finite field $\F$ of size $q >\max\{n, |\S|\}$.
The latter embedding will be the canonical one; the former may be arbitrary, 
but is assumed to be known to all parties.

The messages sent by players in the algorithm will be elements of $\F$. 
The length of any such message is $\log q = \Omega(\log n)$. Since our 
goal is to provide a scalable algorithm we cannot afford the message 
lengths to be much bigger than that. We will choose $\F$ to be a prime field
of size $q=O(n)$. We remark that although generally 
$\S$ is of constant size, we can tolerate $|\S|=O(n)$.

\subsection{Utility Functions}
We will denote the utility function of player $j$ by $u_j$. As mentioned 
before, we assume that the players are learning preferring, \ie, each player 
prefers any outcome in which he learns the secret to every outcome in 
which he does not learn the secret. More formally, for outcome $\mathbf{o}$ 
of the game, let $R(\mathbf{o})$ denote the set of players who learn the 
secret. If $\mathbf{o}$ and $\mathbf{o'}$ are outcomes of the game such 
that $j \in R(\mathbf{o}) \setminus R(\mathbf{o'})$, then 
$u_j(\mathbf{o}) > u_j(\mathbf{o'})$. 
As in~\cite{kol2008games}, we denote
\begin{align*}
\Uplus_j &= \max \{ u_j(\mathbf{o})\; | \; j \in R(\mathbf{o}) \}\\
U_j &= \min \{ u_j(\mathbf{o}) \; | \; j \in R(\mathbf{o}) \}\\
\Uminus_j &= \max \{ u_j(\mathbf{o}) \; | \; j \notin R(\mathbf{o}) \}.
\end{align*}
Thus $\Uplus_j$ is the utility to player $j$ of the best possible outcome 
for $j$, $U_j$ is the utility to $j$ of the worst possible outcome in 
which $j$ still learns the secret, and $\Uminus_j$ is the best possible 
utility to $j$ when he does not learn the secret. By the learning-preferring 
assumption, we have for all $j$,
\[
\Uplus_j \ge U_j > \Uminus_j.
\]
We will denote by $\U$, the quantity
\[
\U := \max_{j \in [n]} \frac{\Uplus_j - \Uminus_j}{U_j -\Uminus_j}.
\] 
Note that $\U \ge 1$. We assume that $\U$ is constant, \ie, that it does 
not depend on $n$.\footnote{Technically, we can achieve scalable (polylog)
communication even if we allow $\U$ to be as big as $\polylog(n)$}

We also assume that the utilities are such that \emph{a priori} the players 
have an incentive to play the game rather than just guess the secret at 
random. In other words, we require that 
\begin{equation}\label{eqn:su}
\U < |\S|.
\end{equation}
We have said earlier that $|S|$ may be as big as $n$.  If that is the case, 
then \eqref{eqn:su} is trivially satisfied since $\U =O(1)$. When $\S$ is of 
constant size however, \eqref{eqn:su} is a genuine constraint.

\subsection{Game Theoretic Concepts}
In this section we review some game theoretic solution concepts.

Recall that an $n$-tuple of strategies for an $n$ player game is called 
a \emph{Nash equilibrium} if no player has an incentive to unilaterally deviate 
from the equilibrium strategy, when all others are following it. 

%% As mentioned earlier, Kol and Naor~\cite{kol2008games} showed that no Nash 
%% equilibrium is possible for rational secret sharing with two players 
%% using non-simultaneous broadcast channels (and hence also using secure 
%% private channels). The goal 
%% is therefore to find a solution that is close to being a Nash equilibrium.

In games of incomplete information which have multiple rounds, there is 
the further question of whether the players are forced to commit to their 
strategies before the start of the game or whether they have the option to 
change strategies in the middle of the game, after some rounds have been 
played and they may learn some new information. Kol and 
Naor~\cite{kol2008games} defined a Nash equilibrium to be 
\emph{everlasting} if after any history that is consistent with all 
players following the equilibrium strategy, it is still true (despite 
whatever new information the players may have learned over that history) 
that a player choosing to deviate unilaterally cannot gain in expectation, 
\ie, following the prescribed strategy remains a Nash equilibrium. 
This is a stronger concept than the usual Nash equilibrium, where 
the strategies are committed to up front.

\section{Algorithm For All Players Present}
\label{s:algnofn}
We now describe our scalable mechanism for $n$-out-of-$n$ secret
sharing. First, in Section~\ref{sec:commtree} we describe the communication 
tree that is used by the dealer and players. An informal description
of the mechanism follows in Section~\ref{sec:alg}. The formal descriptions 
of the dealer's and players' protocols 
appear respectively as Algorithms~\ref{alg:Dealer} and~\ref{alg:Player}.

\subsection{The Communication Tree}\label{sec:commtree}

Recall that a complete binary tree is a binary tree in which all the 
internal nodes have exactly two descendants, all the leaves are at the 
two deepest levels, and the leaves on the deepest level are as far left 
as possible.

Our communication tree is a complete binary tree with $n$ leaves. 
The leaves will be labelled 1 to $n$ from left to right. Next every 
internal node which is a parent of two leaves is labelled with the 
odd label from among its two children. Finally, the remaining internal 
nodes are labelled in order with even numbers, proceeding top to bottom and left to right, starting with 2 at the root. If $n$ is odd, then each even number
appears at some internal node. If $n$ is even, we will place the last 
even number, $n$ at the root, along with 2 (so the root will have two labels.)
The tree thus labelled has the following properties:
\begin{itemize}
\item Every even label occurs at some internal node. (Note that if $n$ is 
odd, there will be an odd label that occurs only at a leaf and not at any 
internal node. This will not matter.)
\item No even labelled internal node has an odd labelled node above it.
\item Every path from root to leaf has exactly one odd label (the same odd 
label may occur once or twice on the path.)
\end{itemize}
Figure~\ref{fig:tree} illustrates the labelling scheme for five and six players.

\begin{figure}
\begin{center}
\includegraphics[scale=0.2]{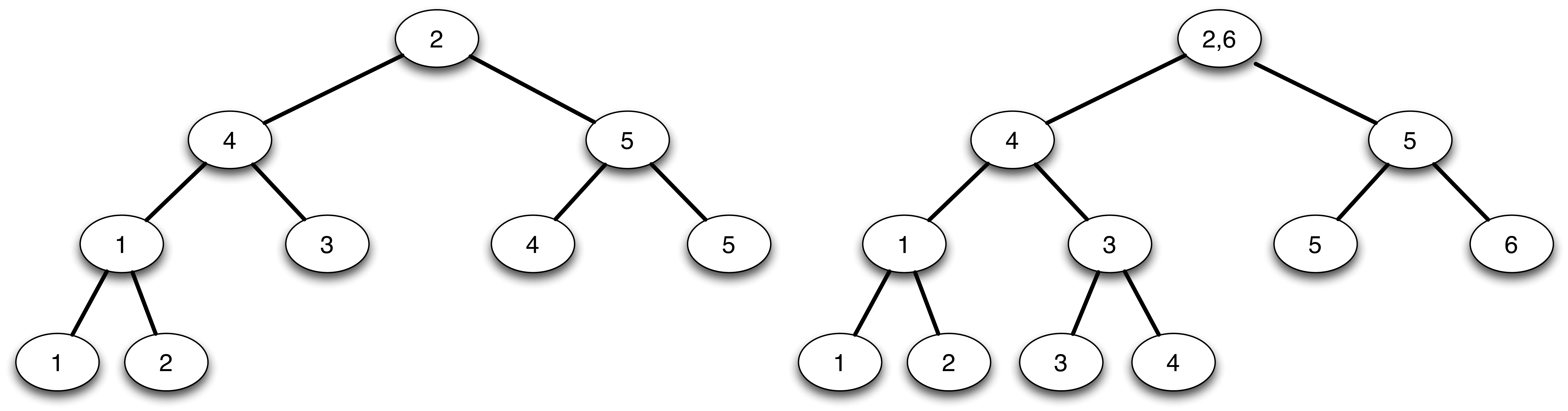}
\end{center}
\caption{Communication trees for five players and six players}
\label{fig:tree}
\end{figure}

\subsection{Our Algorithm}\label{sec:alg}

%%%  Dealer's Protocol   %%%
\begin{algorithm}
\caption{Dealer's Protocol} \label{alg:Dealer} 
$\F$ field of size $q$ (to represent messages in the algorithm)
$n$ players with distinct identifiers in $[n] \hookrightarrow \F$,
$\beta \in (0, 1)$: geometric distribution parameter.
Complete binary tree with $n$ leaves, labelled as described in 
Section~\ref{sec:commtree} known to everyone.
\begin{enumerate}
\item Choose $X, Y,$ independently from a geometric distribution with 
parameter $\beta$. Round $X$ is the definitive one. Short players will receive 
full input for $X-1$ rounds and partial input for round $X$. Long players 
will receive full input for $X+Y-1$ rounds and partial input for round $X+Y$. 
For convenience we will create all the inputs for $X+Y$ rounds, and 
truncate them appropriately before sending them to the players.
\item For each round $t$ between 1 and $L=X+Y$:
\begin{itemize}
\item If $t<X+Y$, choose a random permutation $\pi_t \in S_n$. 
If $t=X+Y$ choose a permutation $\pi_L$ which is random subject to the 
constraint that all the long players (determined by $\pi_X$) are assigned 
to odd labels under $\pi_L$.
For round $t$ player $j$ 
will be assigned to all nodes marked $\pi_t(j)$ in the tree. 
\item If $t=1$, $m_1=0$ // (Otherwise $m_t$ was set in the previous round)
\item For every player $j$, $(\pi_t(j), (\pi^{-1}_t(i)| 
\mbox{node }i \mbox{ is a neighbor of node } \pi_{t}(j) \mbox{ in the tree.}))$,
is a tuple of elements of $\F$ representing $j$'s position and the identities 
of his neighbors in the tree for round $t$. 
$P^{j}_t = (\pi_t(j) + m_t, ((\pi^{-1}_t(i) +m_t| 
\mbox{node }i \mbox{ is a neighbor of node } \pi_{t}(j) \mbox{ in the tree.}))$
is a masked version.
\item Choose a random mask $m_{t+1} \in \F$ (for the {\bf next} round.)
\item Create shares of $m_{t+1}$ by calling 
\recursiveShares$(\mbox{root}, m_{t+1})$.
\item If $t = X$ $s_t \leftarrow $ true secret.\\
 Otherwise, $s_t \leftarrow $ random element of $\S$
\item Create shares of $s_t$ by calling \recursiveShares$(\mbox{root}, s_t)$.
\item Create tags and verification functions for all the messages to be 
sent in round $t$
\item For each player $j$, $j$'s (full) input $I^j_t$ for round $t$  
consists of $P^j_t$, shares of $m_{t+1}$ and $s_t$ corresponding to node 
$\pi_t(j)$, tags to authenticate all messages to be sent by $j$ and 
verification vectors for all the 
messages to be received by $j$. Partial input $\tilde{I}^j_t$ consists of 
all of the above except the authentication tags for sending messages to your 
children (in the down-stage).
\end{itemize}
\item Identify the short players as those players $j$ who are at odd numbered 
nodes in the definitive iteration, \ie, $\pi_X(j)$ is odd.
\item For each short player $j$, send $j$ the list 
$I^j_1,\dots I^j_{X-1}, \tilde{I}^j_X$.
\item For each long player $j$, send $j$ the list 
$I^j_1,\dots I^j_{L-1}, \tilde{I}^j_L$.
\end{enumerate}
\end{algorithm}

%%%  Recursive Shares   %%%
\begin{algorithm}
\caption{ \recursiveShares(node $w$ , $\F$-element $y$):}\label{alg:shares}
$n$-leaf complete binary tree global data structure $V$; for node $w'$, 
$V^{w'}$ denotes the location for the data associated with $w'$.\\
Initially called with the root node and the value for which shares are to be 
created, this function populates $V$ with 
intermediate values. The values at the leaves are the shares for the players 
at the corresponding leaves of the  communication tree.
\begin{enumerate}
\item $V^w   \leftarrow y$.
\item If $w$ has children $\ell(w)$ and $r(w)$:
	\begin{enumerate}
	\item Choose random slope $\mu$ from field $\F$.
	\item Let $f$ be the line with slope $\mu$ and y-intercept $y$.
	\item $\recursiveShares(\ell(w), f(-1))$.
	\item $\recursiveShares(r(w), f(1))$.
	\end{enumerate}
\end{enumerate}
\label{alg:recursiveShares} 
\end{algorithm}

%%%  find Verification Function   %%%
\begin{algorithm}
\caption{Create Authentication Data ($\F$-element $y$,):   
// $y$ is the message to be transmitted} \label{alg:verification}
\begin{enumerate}
\item Choose $a \in \F$ and $b \in \F^* = \F\setminus\{0\}$ independently, 
uniformly at random.
\item $c = y + b*a$.
\item $a$ is the tag, to be given to the sender of message $y$, $(b,c)$ is the 
verification vector, to be given to the recipient of the message $y$.  
\end{enumerate}
\end{algorithm}

%%%  Player Protocol   %%%
\begin{algorithm} 
\caption{Protocol for Player $j$} \label{alg:Player} 
S=0; M=0 \\
If at any time you receive spurious messages (messages not expected uder the 
protocol), ignore them. \\
On round $t$:
\begin{itemize}
\item[] \upStage:
\begin{enumerate}
\item $m_t = M$
\item Subtract $m_t$ from all elements of $P^j_t$ to find out your position 
in the tree and the identities of your neighbors for round $t$.
\item (as player at a leaf) Send your shares of $s_t$ and $m_{t+1}$ along with 
their tags to your parent in the tree.
\item (as player at an internal node) 
\begin{enumerate}
\item Receive (intermediate) shares of $s_t$ and $m_{t+1}$ and tags from 
left and right chidren. Use the 
appropriate verification vectors to check that correct messages have been sent.
If a fault is detected (missing or incorrect message) output $S$ and quit.
\item For each of $s_t$ and $m_{t+1}$: interpolate a degree 1 polynomial $f$ 
from $(-1, \mbox{ left-share})$ and $(1, \mbox{ right-share})$. 
Evaluate $f(0)$. This is your share.
\item If you are not at the root, send the above reconstructed shares of $s_t$ 
and $m_{t+1}$ to your parent(s) along with the appropriate tags. If you 
\emph{are} at the root, these shares are the actual values of $s_t$ and 
$m_{t+1}$.
\end{enumerate}
\end{enumerate}
\item[] \downStage:
\begin{enumerate}
\item If you are at the root, set $S=s_t$ and $M=m_{t+1}$ and send these 
values along with authentification tags to your left and right children.
\item Else
\begin{enumerate}
\item (as a non-root internal node) Receive $s_t$ and $m_{t+1}$ and tags from 
your parent and use verificaton vectors to check them. If fault detected, 
output $S$ and quit.
\item Set $S=s_t$ and $M=m_{t+1}$.
\item Send $s_t$ and $m_{t+1}$ to your children along with the appropriate tags.
If you are a short player and have no authentication tags, output $s_t$ and 
quit.
\end{enumerate}
\end{enumerate}
\item[] $t \leftarrow t+1$
\end{itemize}
\end{algorithm}

The dealer is active only once at the beginning of the game, and during 
this phase of the game the players' inputs are prepared.

The dealer independently samples two random variables $X$ and $Y$ from 
a geometric distribution with parameter $\beta$ (to be determined later). 
$X$ will be the definitive iteration, or the round of the game in which 
the true secret is revealed. $Y$ will be the amount of padding on the 
long players' input. Note we have two kinds of players: short players 
will receive enough input to last for $X$ rounds of the game while long 
players will receive enough input to last for $X+Y$ rounds of the game.
The partitioning of players into short and long will be random, and the 
players themselves will not know which are which. This is critical in 
our analysis as is discussed in Section~\ref{sec:analysis}.

Communication between the players in our protocol will be restricted 
to sending messages to their neighbors in the communication tree.  
In order not to reveal which players are the short players, the players 
will be reassigned to new positions in the tree in each round. This is 
accomplished by choosing a random permutation of the players each round 
and assigning them to labelled nodes of the tree according to it.
The short players are the ones who are at odd labelled nodes in the 
definitive round. 

Since the players must be at different nodes in the tree each round, their 
input must contain this information. At the same time, the positions of the 
players for all the rounds cannot be revealed up front, since this may give 
away information about who the short players are. A naive idea to solve this 
problem is, in each round, to distribute shares of the permutation for the 
next round. Then 
during each round, the players could reconstruct the permutation from the 
shares and use it to reposition for the next round. Unfortunately, there 
is a problem with this approach. To represent permutations of $n$ symbols, 
we need a field of size at least $n!$. To transmit elements of such a field,
players would need to send messages of length $\log (n!) \sim n\log n$.
This is unacceptable if we desire scalability. 

To get around this problem, we note that it is not really necessary for 
players to know the entire permutation. Each player only needs to know 
its own position and the identities of its neighbors. We only need a field 
of size order $n$ to encode this, and so, symbols of this field may be 
transmitted with messages of length $\log n$. Since it is dificult 
via share reconstruction to transmit different messages to the leaves of 
the tree, we simply provide each player with a list of positional data for the 
entire game. But in order that players do not know their positional data 
for a round before actually getting to that round, this data is masked 
by adding in a random element of the field. Positional data for the first 
round is sent unmasked. The players also receive iterated shares of the masks
for the next round. Thus, in each round, players reconstruct a mask, and use it 
to unmask the positional data and reposition themselves for the next round.

For each round, the full input consists of the following:
\begin{itemize}
\item iterated shares of a purported secret (the true secret in the 
definitive round);
\item masked versions of positional data for the current round 
(position and identities of neighbors in the tree);
\item shares of the mask for the \emph{next round} of positional data;
\item tags for all the messages to be sent; and
\item verification vectors for all the messages to be received.
\end{itemize}

The iterated shares the players receive are constructed by 
starting with the symbol to be reconstructed at the root and 
recursively constructing 2-out-of-2 Shamir shares down the tree, all 
the way down to the leaves. The shares at the leaves are the iterated 
shares the players receive. See Figure~\ref{fig:shares} and 
Algorithm~\ref{alg:recursiveShares} for details of how the iterated shares are 
constructed. At reconstruction time, shares are sent 
up the tree. At each internal node, a pair of shares received from 
the two children is reconstructed into a degree 1 polynomial which is used 
to obtain the value to be sent further up the tree. At the root, the 
original symbol is reconstructed and transmitted down the tree.  
Note that the advantage of this scheme 
over simply using n-out-of-n Shamir shares is that the size of the 
messages does not increase as the messages are transmitted up the tree.

\begin{figure}
\begin{center}
\includegraphics[scale=0.2]{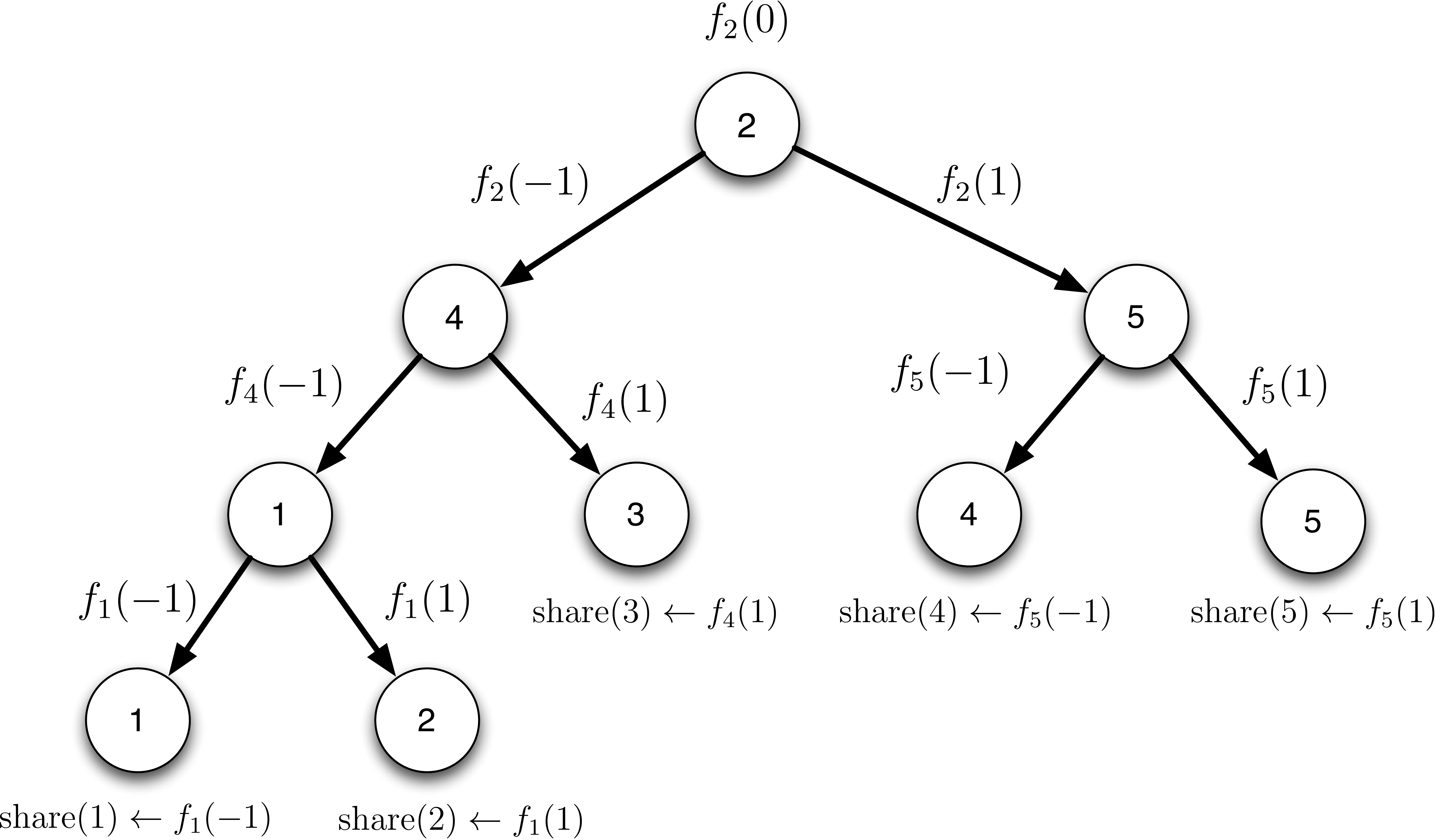}
\end{center}
\caption{Construction of the iterated shares}
\label{fig:shares}
\end{figure}

As mentioned earlier, round $X$ is the definitive round, when the 
encoded symbol is the true secret. Short players receive full input 
for every round prior to this round. For round $X$ they only receive 
partial input. Long players receive input for $X+Y$ rounds. However,
they, too receive only partial input for their last block of input. 
Otherwise, a player would be able to distinguish whether or not he is a 
short player by looking at his last block of input.
Here, partial input consists of all of the pieces of data from the full input, 
\emph{except} the tags to send the decoded message to your children in the 
down stage of the round. 

Since, in the definitive round the short players 
(with odd labels) are in the level above the leaves, and all the long 
players are at internal nodes higher than that in the tree, the long players 
have learned the secret before the short players, although since they have
input for more rounds of the game, they do not know (\ie, cannot guess) that 
it is the definitive round, and that the secret they have learned is in 
fact the  true secret. Thus they send the secret down the tree, and 
eventually it gets to the short players. Thus the short players learn 
the secret as well. Since they have no more input they know that the 
game is over and the secret is the true one. However, since they do not 
have any more authentication data, they cannot gain by remaining in the 
game and trying to fool the others into thinking that the secret has not 
yet been reconstructed. Finally, when the long players do not receive a 
message at the end of the definitive iteration, they too realize the 
game has ended and output the correct secret.

\section{Analysis of Algorithm for All Players Present}\label{sec:analysis}

In this section we will prove Theorem~\ref{thm:main}, which shows that the secret 
sharing scheme we have described is in fact a scalable $n$-out-of-$n$ 
secret sharing scheme, and that it is an everlasting Nash equilibrium 
in which all the players learn the secret. 

We begin by showing that our rescursive scheme for encoding a symbol into $n$ 
iterated shares (Algorithm~\ref{alg:shares}) is an $n$-out-of-$n$ scheme.

\begin{lemma}\label{lem:shares}
Let $\sigma \in \F$ be a symbol that is encoded into $n$ iterated 
shares, $\sigma_1, \dots \sigma_n$, by Algorithm~\ref{alg:shares}. Then $\sigma$
can be decoded from all $n$ of the shares, but knowledge of fewer than 
$n$ of the shares reveals no information about $\sigma$
\end{lemma}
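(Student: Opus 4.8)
The plan is to induct on the structure of the complete binary tree, proving a slightly stronger statement about every subtree: for any node $w$, the value $V^w$ placed at $w$ is recoverable from the collection of leaf-shares in the subtree rooted at $w$, and any strict subset of those leaf-shares is distributed uniformly and independently of $V^w$ (and of all intermediate values strictly above the missing leaves). Reconstruction is the easy direction: it is immediate by the same recursion, since at each internal node $w$ the two children's values $f(-1)$ and $f(1)$ determine the line $f$, hence $f(0) = V^w$; unwinding this from the leaves up to the root recovers $\sigma$. So the content of the lemma is the secrecy claim, and that is the step I expect to be the main obstacle.

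For secrecy, I would argue by induction on the height of the subtree. The base case is a leaf, where there is nothing to prove (the only ``strict subset'' is empty). For the inductive step at an internal node $w$ with children $\ell(w), r(w)$ and line $f$ of slope $\mu$ chosen uniformly at random, suppose we are missing at least one leaf-share in the subtree at $w$; that missing leaf lies in (say) the left subtree. By the inductive hypothesis applied to $\ell(w)$, the known left-subtree shares are independent of $V^{\ell(w)} = f(-1)$. But $f(-1) = V^w - \mu$, and since $\mu$ is uniform in $\F$ and independent of everything else, $f(-1)$ is uniform in $\F$ conditioned on any value of the right-subtree shares and on $V^w$. Hence the full collection of known shares in the subtree at $w$ reveals nothing about $V^w$. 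The one delicate point is bookkeeping the independence across the two sibling subtrees: I would note that all random choices ($\mu$ at $w$, and all the random slopes used inside the two recursive calls) are mutually independent by construction, so conditioning on the right-subtree's shares does not disturb the distribution of $\mu$, and the inductive hypothesis on the left subtree is applied conditionally on those right-subtree values.

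Applying the strengthened claim at the root $w = \mathrm{root}$, with $V^{\mathrm{root}} = \sigma$, gives exactly the lemma: all $n$ leaf-shares determine $\sigma$, while any set of at most $n-1$ of them is independent of $\sigma$ and so reveals no information about it. The only thing to be careful about when writing this up is stating the inductive hypothesis in a form strong enough to push through — in particular that a strict subset of leaf-shares is not merely uninformative about $V^w$ but is actually jointly uniform, which is what lets the argument compose up the tree.
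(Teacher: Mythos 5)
Your proposal is correct and follows essentially the same route as the paper: reconstruction by interpolating the line at each internal node, and secrecy because a single evaluation of a random-slope line is uniform in $\F$ and independent of its $y$-intercept, applied recursively along the path from a missing leaf up to the root. Your write-up is somewhat more careful than the paper's (which argues informally that ``undecodability'' propagates upward from the missing leaf) in making the induction and the joint-uniformity strengthening explicit, but the underlying idea is identical.
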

\begin{proof}
That $\sigma$ can be decoded from all $n$ shares follows easily from the fact 
that two points on a line determine it. Since the value at a node is the 
$y$-intercept of a line passing through the points 
$(-1, \mbox{ left-child value})$ and $(1, \mbox{ right-child value})$, it 
can be reconstructed using interpolation. Starting with the shares 
$\sigma_1 \dots \sigma_n$ at the leaves of the tree, we can reconstruct the 
values bottom-up, and the value at the root is $\sigma$ since this is 
exactly the reverse of the process used to create these shares.

To see why fewer than $n$ shares give us no information about $\sigma$, 
observe that the values at the two children of the root were created by 
choosing a random slope in $\F$ for a line with $y$ intercept $\sigma$, 
and then 
evaluating that line at -1 and 1. Both of these values together determine 
the line, but a single one of them does not eliminate any line as a 
possibility. Thus the values at the children of the root \emph{individually} 
contain no information about the value of the root, and in order to decode 
the value at the root, we need both the values at its children. But now, this 
reasoning applies recursively to all the internal nodes, relative to their 
children. Suppose there is a leaf of the tree at which the share is 
missing. Then the share of its parent cannot be decoded because it is 
equally likely to be any element of the $\F$. This propagates up to its 
grandparent, and then its great-grandparent and so on all the way to the root, 
so that the root cannot be decoded.  
Thus, if even one of the shares is missing, the remaining shares provide 
no information about the value of $\sigma$.
\end{proof}

Next, we discuss the tag-and-hash verification scheme used in the protocol. 
This scheme makes it hard for a sender to successfully fool the intended  
recipient of a message by sending a faked message. At the same time, it 
does not give the recipient of the message any information about the message 
prior to receiving it. Such schemes have been used before (see \eg 
~\cite{wegman1981new, rabin1989verifiable, kol2008games}); we include 
the following proposition for completeness.  See Lemma 1 of~\cite{rabin1989verifiable} for the proof.

\begin{proposition}\label{prop:hash}
The verification scheme (Algorithm~\ref{alg:verification}) has the following 
properties:
\begin{enumerate}
\item The verification vector contains no information about the message, \ie, 
the probability of correctly guessing the message given the verification 
vector is the same as the unconditional probability of guessing the message
\item The probability that a faked message will satisfy the verification 
function is $\frac{1}{q-1}$
\end{enumerate}
\end{proposition}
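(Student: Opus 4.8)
The plan is to treat the two parts separately, each via an elementary argument about the uniform choices of $a\in\F$ and $b\in\F^*$ made in Algorithm~\ref{alg:verification}.

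For part~1 I would prove the stronger statement that the verification vector $(b,c)$ is statistically independent of the message $y$; the claim about guessing probabilities then follows since the posterior distribution of $y$ given $(b,c)$ coincides with its prior, so the optimal guess and its success probability are unchanged. To establish independence, note first that $b$ is drawn uniformly from $\F^*$ independently of $y$ and of $a$, so it remains only to show $c$ is independent of the pair $(y,b)$. Condition on arbitrary fixed values $y=y_0$ and $b=b_0\ne 0$: then $c=y_0+b_0a$ with $a$ uniform on $\F$, and since multiplication by the nonzero constant $b_0$ followed by translation by $y_0$ is a bijection of $\F$, $c$ is uniform on $\F$ with a conditional law that does not depend on $(y_0,b_0)$. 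Hence $c$ is independent of $(y,b)$, and combined with the independence of $b$ and $y$ this gives that $(b,c)$ is independent of $y$.

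For part~2 I would fix an arbitrary attempted forgery, i.e.\ a pair $(\tilde y,\tilde a)$ with $\tilde y$ different from the authenticated message $y$, and compute the acceptance probability. The recipient accepts iff $\tilde y+b\tilde a=c=y+ba$, equivalently iff $b(\tilde a-a)=y-\tilde y$. Since $\tilde y\ne y$ the right-hand side is nonzero, so acceptance forces $\tilde a\ne a$ and pins $b$ to the single value $(y-\tilde y)/(\tilde a-a)\in\F^*$. From the forger's viewpoint $b$ was chosen uniformly from the $(q-1)$-element set $\F^*$ independently of everything the forger sees (the forger holds its own tag $a$ and at most the true message $y$, but not the recipient's verification vector $(b,c)$), so the probability that $b$ equals this required value is exactly $1/(q-1)$; averaging over any randomized choice of $(\tilde y,\tilde a)$ leaves this bound unchanged.

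The argument is essentially routine, and indeed the statement matches Lemma~1 of~\cite{rabin1989verifiable}; the only points needing care are to be explicit about what a forger actually knows so that $b$ is genuinely uniform on $\F^*$ in its posterior, and to record that choosing $\tilde a=a$ can never yield a successful forgery of a message different from $y$.
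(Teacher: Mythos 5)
Your proof is correct. The paper does not include its own argument for this proposition---it explicitly defers to Lemma~1 of~\cite{rabin1989verifiable}---and your two-part argument (for part~1, that $c=y+ba$ is uniform on $\F$ conditioned on any $(y,b)$ because $a\mapsto y+ba$ is a bijection, hence $(b,c)$ is independent of $y$; for part~2, that acceptance of a forgery with $\tilde y\ne y$ forces $b$ to a single value in $\F^*$, which the forger, not holding $(b,c)$, hits with probability exactly $1/(q-1)$) is precisely the standard proof being cited. The two points you flag, namely that the forger's posterior on $b$ is genuinely uniform on $\F^*$ and that $\tilde a=a$ yields acceptance probability $0$, are exactly the right places to be careful.
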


We will now focus our attention on showing that it is a Nash 
equilibrium for all the players to follow our protocol.   
Consider player $j$ and suppose that all other players are committed 
to following the protocol. The next lemma gives a necessary criterion 
for $j$ to have an incentive to cheat.

\begin{lemma}\label{lem:threshold}
If all other players are following the protocol, player $j$ prefers 
to also follow the protocol, unless his probability of successfully 
cheating is at least $\frac{U_j -\Uminus_j}{\Uplus_j -\Uminus_j}$.
\end{lemma}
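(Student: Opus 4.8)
The plan is to compare, from player $j$'s perspective, the expected utility of following the protocol against the expected utility of any deviation. Following the protocol, player $j$ learns the secret with certainty (this is the correctness guarantee of the mechanism), so the expected payoff from honesty is at least $U_j$, the worst outcome in which $j$ still learns the secret. If $j$ deviates, the outcome depends on whether his deviation ``succeeds'' in the relevant sense --- i.e. whether he manages to learn the secret while preventing some other player (or the right set of players) from learning it, in a way that could plausibly yield payoff above $U_j$. Let $p$ denote the probability that $j$'s cheating succeeds. First I would argue that conditioned on the deviation not succeeding, $j$'s payoff is at most $\Uminus_j$: if cheating fails, either $j$ does not learn the secret (payoff $\le \Uminus_j$ by definition), or the situation is no better for $j$ than honest play would have been, so there is no gain --- the point is that the only histories in which $j$ can possibly do \emph{better} than honest play are the ``successful cheat'' histories, and in all other deviating histories $j$ is (weakly) worse off than $\Uminus_j$ would bound, or at best ties with honest play. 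Hence the expected payoff from deviating is at most $p \cdot \Uplus_j + (1-p)\cdot \Uminus_j$.

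Then I would set up the inequality: $j$ has an incentive to cheat only if the expected deviation payoff strictly exceeds the honest payoff, i.e. $p\, \Uplus_j + (1-p)\,\Uminus_j > U_j$. Rearranging (using $\Uplus_j > \Uminus_j$, which holds by the learning-preferring assumption so the coefficient of $p$ is positive) gives $p > \frac{U_j - \Uminus_j}{\Uplus_j - \Uminus_j}$. Contrapositively, if $p \le \frac{U_j - \Uminus_j}{\Uplus_j - \Uminus_j}$ then honest play is at least as good, which is exactly the statement of the lemma. The arithmetic here is elementary; the content is entirely in the modeling claims of the previous paragraph.

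The main obstacle, and the step requiring the most care, is pinning down precisely what ``successfully cheating'' means and justifying the two bounds: (i) that honest play guarantees $j$ payoff $\ge U_j$ (which relies on the correctness of the protocol --- that all honest players learn the secret when everyone follows it, so in particular $j$ does), and (ii) that any deviating history is either a ``successful cheat'' worth at most $\Uplus_j$, or worth at most $\Uminus_j$ to $j$. For (ii) the subtle point is that a deviation that still results in $j$ learning the secret but does nothing harmful is bounded by $U_j$, not $\Uminus_j$ --- so strictly speaking I should define a cheat as ``successful'' whenever the resulting payoff to $j$ exceeds $U_j$, lump all other outcomes (payoff $\le U_j$) together, and observe $U_j > \Uminus_j$ makes the bound $p\,\Uplus_j + (1-p) U_j$ even tighter; but the weaker bound $p\,\Uplus_j+(1-p)\Uminus_j$ already suffices and matches the threshold in the statement, so I would use that. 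I expect the lemma to be stated deliberately at this level of generality precisely so that the hard work --- bounding the actual probability $p$ for our specific protocol --- is deferred to subsequent lemmas; here the only job is the clean decision-theoretic reduction, so I would keep the proof short and defer all protocol-specific reasoning.
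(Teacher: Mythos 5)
Your proposal is correct and follows essentially the same argument as the paper: bound the honest payoff below by $U_j$, bound the deviating payoff above by $p\,\Uplus_j + (1-p)\,\Uminus_j$ using a suitable notion of ``successful'' deviation (the paper takes success to mean $j$ either learns the secret outright or goes undetected and remains in a position to learn it, so that failure entails not learning and hence payoff at most $\Uminus_j$), and rearrange. Your extra discussion of how to delimit ``success'' matches the paper's definition in substance, so nothing further is needed.
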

\begin{proof}
Suppose $j$ is considering deviating from the protocol. We will consider the 
deviation to be successful if either $j$ learns the secret right away, with 
or without being caught, or he does not get caught
and is therefore still in a position to learn it later. The deviation will 
have failed if it is detected, causing the game to end without $j$ learning 
the secret.
Let $p_j$ be the probability that the deviation succeeds.
The maximum utility that $j$ can get is $\Uplus_j$. With probability $1-p_j$, 
the game ends without $j$ learning the secret,
in which case the maximum payoff possible is $\Uminus_j$. Thus a player's 
expected utility from cheating while everyone else follows the 
protocol is at most $p_j\Uplus_j+(1-p_j)\Uminus_j$.

On the other hand, if everyone else follows the protocol, then following 
the protocol guarantees a utility of at least $U_j$. 
Thus the protocol will be a Nash equilibrium if 
\[
U_j > p_j\Uplus_j + (1 - p_j)\Uminus_j 
\]
Rearranging terms, we have a Nash equilibrium if 
\[
p_j < \frac{U_j -\Uminus_j}{\Uplus_j -\Uminus_j}. \qedhere 
\]
\end{proof}

\smallskip

When and how might a player cheat? We note that since players are not 
required to commit to their strategy before starting the game, and
since the progression of the game reveals information, a player may as 
well defer his decision to cheat in a future round until that future round. 
Thus, at any given time, the decision facing the player is whether to 
cheat in the current round. In order to weigh the benefits 
of such a decision, the player needs an estimate of whether the current 
round is likely to be the definitive one. 

As remarked earlier, the purpose of having short and long players is to 
create uncertainty about when the definitive round of the game is, until it 
is too late to gain from this information.  

The players know that $X$ is chosen from a geometric distribution with 
parameter $\beta$. Thus, \emph{a priori} the 
probability that $X$ takes on any particular value is at most $\beta$, 
the most likely being $X=1$, whose probability is exactly $\beta$. 
As the game progresses, players receive partial information about the 
value of $X$; as soon as they receive their inputs they can eliminate all
values of $X$ larger than their input length, if the game did not end on 
the first round, they learn that $X \neq 1$ and so on. Clearly, when a 
player reaches his last block of input, he knows that the current round 
is definitive. The next lemma shows that until that stage, a player's 
estimate that the current round is definitive remains small.

\begin{lemma}\label{lem:guess}
Let $j$ be a player who initially received input for $k>1$ rounds of the game.
and let $1 \le t<k$ be the current round. Then $j$'s estimate of the 
probability that the current round is definitive, conditioned on all the 
information he has learned, is at most $2\beta$. 
\end{lemma}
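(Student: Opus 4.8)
The plan is to reduce the lemma to a short conditional-probability computation, after a preliminary step that isolates exactly what information is relevant. First I would pin down Player $j$'s state of knowledge after $t$ rounds, assuming $1 \le t < k$ and all other players honest. I claim the only thing about $j$'s transcript that correlates with the definitive round $X$ is the pair $(k,t)$: (i) the structure and lengths of the input blocks for rounds $1,\dots,t$ are identical for short and long players (this is precisely why partial input omits only the down-stage tags), so they reveal nothing beyond ``$j$ got $k$ blocks''; (ii) by Lemma~\ref{lem:shares} every share $j$ sees is uniform and independent of the encoded symbol, and by Proposition~\ref{prop:hash} the verification vectors carry no information about the messages, so neither reveals whether a past round was definitive; and (iii) the permutations $\pi_1,\dots,\pi_t$ are chosen by the dealer independently of $X$ and $Y$ (the only constrained permutation is $\pi_L$ for the final round $L=X+Y$, which $j$ does not witness when $t<k$), so $j$'s observed positions and neighbour identities are uniform and independent of $(X,Y)$. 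Since no other player has deviated, the game is still in progress at round $t$ exactly on the event $\{X \ge t\}$ (the short players quit, cueing everyone, precisely at round $X$).

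With this in hand the computation is routine. Write $\Pr[X=i]=\beta(1-\beta)^{i-1}$, similarly for $Y$, with $X,Y$ independent. The set of short players is $\{j : \pi_X(j)\text{ odd}\}$ with $\pi_X$ an independent uniform permutation, so ``$j$ is short'' is independent of $(X,Y)$ with probability $p_s := \lceil n/2\rceil/n$; set $p_\ell := 1-p_s$ and note $p_s \ge p_\ell$. Let $A$ be the event that $j$ receives $k$ blocks, i.e. ($j$ short and $X=k$) or ($j$ long and $X+Y=k$). Using $t<k$, the event $\{X=t\}\cap A$ can only occur on the ``long'' branch with $Y=k-t$, so $\Pr[X=t,A]=p_\ell\,\beta^2(1-\beta)^{k-2}$; meanwhile $A\cap\{X\ge t\}$ contains the disjoint ``short'' branch $\{j\text{ short}\}\cap\{X=k\}$, contributing $p_s\,\beta(1-\beta)^{k-1}$. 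Hence
\[
\Pr[X=t \mid A,\ X\ge t] \;\le\; \frac{p_\ell\,\beta^2(1-\beta)^{k-2}}{p_\ell\,\beta^2(1-\beta)^{k-2}+p_s\,\beta(1-\beta)^{k-1}}
\;=\; \frac{p_\ell\beta}{p_\ell\beta+p_s(1-\beta)} \;\le\; \frac{p_\ell\beta}{p_\ell\beta+p_\ell(1-\beta)} \;=\; \beta \;\le\; 2\beta,
\]
where the last inequality in the chain uses $p_s(1-\beta)\ge p_\ell(1-\beta)$. (The argument in fact yields the sharper bound $\beta$, with a factor $2$ to spare, which leaves room for any slack one might prefer in the treatment of step one.)

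The main obstacle is step one, not the arithmetic: making rigorous the assertion that, conditioned on $A$ and on $X\ge t$, the distribution of $j$'s view of rounds $1,\dots,t$ does not depend on $X$. I would phrase this as an explicit sub-claim and verify it component by component against the Dealer's protocol — the per-round re-randomization of tree positions, the masking of positional data (with masks delivered as iterated shares whose uniformity is Lemma~\ref{lem:shares}), the information-hiding of the tag-and-hash scheme (Proposition~\ref{prop:hash}), and the deliberate symmetry of the partial final input blocks between short and long players. Once that sub-claim is established, Player $j$'s posterior on $X$ is exactly the distribution analyzed in the second paragraph, and the displayed bound completes the proof.
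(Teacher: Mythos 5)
There is a genuine gap, and it sits exactly where you located it: your step-one sub-claim (iii) is false. The positional data a player sees is \emph{not} independent of $X$ once you condition on his input length. Shortness is by definition the event that $\pi_X(j)$ is odd, and $L_j$ ties shortness to $X$ ($L_j=X$ if short, $L_j=X+Y$ if long). Hence, conditioned on $L_j=k$ and $t<k$, the event $X=t$ forces $j$ to be long and therefore forces $\pi_t(j)$ to be \emph{even}, whereas on the competing event $X=k$ (the short branch) the label $\pi_t(j)$ is even only with probability $\lfloor n/2\rfloor/n\approx 1/2$. So observing the parity of $\pi_t(j)$ genuinely shifts the posterior on $X=t$: if the label is odd the probability drops to $0$, and if it is even it roughly doubles relative to your unconditioned computation. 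The same issue arises for $\pi_{t+1}(j)$, which the player unmasks during round $t$ and which is forced odd when $t+1=k$. The paper's proof conditions on both parities explicitly (events $\mathcal{E}_t$ and $\mathcal{E}^b_{t+1}$), shows the $\pi_{t+1}$ terms cancel, and obtains
\[
\Pjt \le \frac{\beta}{\tfrac12(1-\beta)+\beta} \le 2\beta ,
\]
where the extra factor of (roughly) $2$ comes precisely from the asymmetry you assumed away.

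Concretely, your computation of $\Pr[X=t\mid A,\ X\ge t]\le\beta$ is arithmetic about the wrong conditional probability: the lemma requires a bound on $j$'s estimate given \emph{all} the information he has learned, which includes his node labels, and the true conditional estimate after seeing an even label at round $t$ can be arbitrarily close to $2\beta$, exceeding your claimed sharper bound of $\beta$. So the ``factor 2 to spare'' is not slack at all; it is consumed by the conditioning you omitted. To repair the argument you must carry the parity observations through the Bayes computation as the paper does (splitting the denominator into the short branch with $X=k$ and the long branch with $X=t$, $Y=k-t$, and accounting for $\Pr(\pi_t(j)\text{ even})$ only on the short branch); your items (i) and (ii) about shares, tags, masks, $s_t$, and neighbour identities are fine and match the paper's dismissal of those components.
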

\begin{proof}
\newcommand{\longp}{\mathcal{L}_j}
\newcommand{\shortp}{\mathcal{L}_j^{\mathrm{C}}}
Let $L_j$ denote the random variable which is the initial input length of 
player $j$. Then we know that 
\begin{equation}\label{eq:Lj}
L_j =\begin{cases} X & \mbox{ if $j$ is a short player} \\
X+Y & \mbox{ if $j$ is a long player}\end{cases}
\end{equation}
Also, let $\longp$ denote the event that $j$ is a long player and 
$\shortp$ the event that $j$ is a short player.  By hypothesis, 
the current round is $t \ge 1$, and player $j$ received an initial input of 
length $k>t$. What information does player $j$ know in round $t$?
\begin{itemize}
\item Since his initial input was of length $k$ he knows that $L_j=k$ and  
$X \le k$ and moreover, that $X=k$ if and only if $\shortp$.
\item Since the game has entered round $t$ he knows that $X\ge t$. 
\item He knows his position $\pi_t(j)$ and the identities of his neighbors 
in round $t$
\item He also has learned $s_t$, $m_{t+1}$ and using the latter to unmask his 
positional data, he knows $\pi_{t+1}(j)$ and the identities of his neighbors
in round $t+1$.  Technically, he learns these just prior to his turn in the 
downstage in round $t$, but this is fine, as we will argue later that 
no player ever has any reason to cheat during the upstage of a round.
\end{itemize}

We note that knowing $s_t$ does not benefit player $j$ in any way as far as 
estimating the probability that $X=t$ goes, since $s_t$ is equally likely to 
be any element of $\S$, independently of $X$. Similarly, knowing the 
identities of his neighbors 
does not affect his estimate, since all other players are equally likely to 
be his neighbors independently of $X$. 

On the other hand, knowing $\pi_t(j)$ and $\pi_{t+i}(j)$ does affect the 
estimate. By construction:
\begin{itemize}
\item In the definitive iteration, short players have odd labels, and 
long players have even labels; and
\item Each player has an odd label in his last round of input 
\end{itemize}
Thus if $\pi_t(j)$ is odd, then player $j$ knows that the current round is not
definitive, since if $X=t$, then $k>t$ implies that $j$ is a long player 
and should have an even label.  In particular, conditioned on everything he knows, 
$\Pr(X=t)=0$. Since $2\beta > 0$ the lemma is proved in this case.

\newcommand{\EEE}{\mathcal{E}_t}
\newcommand{\EEEE}{\mathcal{E}^{b}_{t+1}}

For the remainder of the proof we will assume that $\pi_t(j)$ is even
and denote this event $\EEE$.
%and exclude it from the conditioning.

Now what about $\pi_{t+1}(j)$? If $k=t+1$, then we know that $\pi_{t+1}(j)$ 
is odd by construction and knowing this contains no additional information 
over knowing $L_j=k$. On the other hand when $k>t+1$, if $\pi_{t+1}(j)$ 
is odd, then player $j$ knows that $X$ cannot be $t+1$ and this affects the 
probability that $X=t$.

Let $b \in \{ 0, 1\}$ be the observed parity of $\pi_{t+1}(j)$ and let 
$\EEEE$ denote the event that the 
parity of $\pi_{t+1}(j)$ is $b$. Note that if $k=t+1$ we must have 
$b=1$.

Let $\Pjt$ be player $j$'s estimate that the current round, 
$t$, is definitive, conditioned on everything he knows. Then
\begin{align*}
\Pjt &= \Pr(X=t \;|\; L_j=k \sand t\le X \le k \sand \EEE \sand \EEEE)\\
&=\frac{\Pr(X=t \sand L_j=k \sand \EEE \sand \EEEE)}
{\Pr(L_j=k \sand t\le X \le k \sand \EEE \sand \EEEE)}\\
&\le \frac{\Pr(X=t \sand L_j=k \sand \EEE \sand \EEEE)}
{\Pr(L_j=k \sand X \in \{t,k\} \sand \EEE \sand \EEEE)}.
\end{align*}
where the inequality follows from the fact that the event $X \in \{t,k\})$ is a subset of the event $t\le X \le k$.

Now, if the current round \emph{is} definitive \ie, $X=t$, then $j$ is not 
a short player, and $L_j=X+Y$. So the event $X=t \sand L_j = k 
\sand \EEE \sand \EEEE$ is the same as the event $\longp \sand
X=t \sand Y=k-t \sand \EEEE$.   Note that $\EEE$ is 
implied  by $\longp \sand X=t$ and can therefore be dropped.

For the denominator, the event $L_j=k \sand X \in\{t, k\} \sand 
\EEE \sand \EEEE$ can be split into the union of disjoint events 
$\shortp \sand X=k \sand \EEE \sand \EEEE$ and 
$\longp \sand X=t \sand Y=k-t \sand \EEEE$. The latter summand 
is the same as the numerator, and loses the $\EEE$ term for the same reason.

Making these substitutions in the above expression, we get 
\[
\Pjt \le \frac{\Pr(\longp \sand X=t \sand Y=k-t \sand \EEEE)}
{\Pr(\shortp \sand X=k \sand \EEE \sand \EEEE)
+\Pr(\longp \sand X=t \sand Y=k-t \sand \EEEE)}
\]
The random variables $X$, $Y$ and the indicator that $j$ is a long player
are independent. Thus the numerator of the above expression becomes
\begin{align*}
\Pr(\longp \sand X&=t \sand Y=k-t \sand \EEEE)\\
&= \Pr(\EEEE|\longp \sand X=t \sand Y=k-t) 
\Pr(\longp)\Pr(X=t)\Pr(Y=k-t)\\
&= \Pr(\EEEE|\longp \sand X=t \sand Y=k-t) 
\frac{\lfloor n/2\rfloor}{n} (1-\beta)^{t-1}\beta (1-\beta)^{k-t-1}\beta\\
&= \Pr(\EEEE|\longp \sand X=t \sand Y=k-t) 
\frac{\lfloor n/2\rfloor}{n} (1-\beta)^{k-2}\beta^2
\end{align*}
Similarly we tackle the first term in the denominator. Since $X$ and all the 
$\pi_i$ are independent and $k\ne t$, then $\pi_t$ and $\pi_X$ 
are independent conditioned on $X=k$. It follows that the events 
$\shortp = \pi_X(j)$ is odd; $\EEE = \pi_t(j)$ is even; and $X=t$ are 
independent.  Thus, we have the following.
\begin{align*}
\Pr(\shortp \sand X&=k\sand \EEE \sand \EEEE)\\
&= \Pr(\EEEE|\shortp \sand X=k \sand \EEE) \Pr(\shortp \sand X=k \sand \EEE)\\
&= \Pr(\EEEE|\shortp \sand X=k \sand \EEE) \Pr(\shortp)\Pr(\EEE)\Pr(X=k)\\
&= \Pr(\EEEE|\shortp \sand X=k \sand \EEE) 
   \frac{\lceil n/2\rceil \lfloor n/2\rfloor}{n^2} (1-\beta)^{k-1}\beta \\
&\ge \Pr(\EEEE|\shortp \sand X=k \sand \EEE) 
\frac{\lfloor n/2\rfloor}{2n} (1-\beta)^{k-1}\beta
\end{align*}

Now, if $k>t+1$ then $\pi_X$ and $\pi_{t+1}$ are independent conditioned 
on $X$ being either $t$ or $k$, and hence, the events 
$\EEEE$ and $\longp \sand X=t \sand Y=k-t$ are independent, as are the 
events
$\EEEE$ and $\shortp \sand X=k \sand \EEE$.
It follows that $\Pr(\EEEE|\longp \sand X=t \sand Y=k-t)$ and 
$\Pr(\EEEE|\shortp \sand X=k \sand \EEE)$ both equal $\Pr(\EEEE)$.
On the other hand if $k=t+1$ then $b=1$ and $\EEEE$ is implied  in both cases
Thus, $\Pr(\EEEE|\longp \sand X=t \sand Y=k-t)$ and 
$\Pr(\EEEE|\shortp \sand X=k \sand \EEE)$ are both 1. Either way, they are 
equal, and since their common value occurs in the numerator as well as in 
both terms in the denominator, it simply cancels out.  

Putting everything together we see that 
{\belowdisplayskip=-14pt
\begin{align*}
\Pjt &\le \frac{\frac{\lfloor n/2\rfloor}{n} (1-\beta)^{k-2}\beta^2}
{\frac{\lfloor n/2\rfloor}{2n} (1-\beta)^{k-1}\beta
+ \frac{\lfloor n/2\rfloor}{n} (1-\beta)^{k-2}\beta^2}\\
&= \frac{\beta}{\frac12(1-\beta) +\beta}\\
&\le 2\beta   
\end{align*} \qedhere}
\end{proof}

We remark that although in the above proof we have bounded player $j$'s
estimate during the down-stage,  a nearly identical proof shows the same bound for the 
up-stage (when $\pi_{t+1}(j)$ is unknown).

We are now ready to prove the main theorem. 

\begin{proof}[Proof of Theorem~\ref{thm:main}]
We will begin by showing that the protocol is a Nash equlibrium 
in which all the players learn the secret.

Suppose all the players follow the protocol. Then every round, during the 
up-stage players send their shares up toward the root where they are decoded, 
and during the down-stage the reconstructed secret and mask are sent back 
toward the leaves.  If the odd players in the round do not drop out at the 
end of the round then play continues into the next round. In the definitive 
round, the real secret is reconstructed at the root and all the even 
labelled players, who are long players learn it first. Once it gets to the 
short players with odd labels, they drop out of the game since they have no 
tags to send any more messages. This signals the end of the game to the long 
players who then realize that the current reconstructed secret is the true one. 
Thus if all the players follow the protocol, everyone learns the secret.

Now suppose all players other than $j$ are following the protocol. We want 
to show that player $j$ prefers following the protocol over 
deviating.  

At the beginning of the game, each player has set their current guess for the 
secret to 0. If no cheating occurred before the current round $t$ then 
during round
$t-1$ all the players set their current guess to $s_{t-1}$. Thus, at the 
beginning of round $t$, all players have the same guess for the secret. 
(Round $t-1$ has been eliminated as the definitive one, but $s_{t-1}$ still has 
probability $1/|\S|$ of being the true secret.) Moreover, since by 
Lemma~\ref{lem:shares} partial information about the shares 
reveals no information about the symbol they encode, throughout 
the up-stage player $j$ has no better guess than $s_{t-1}$ for the secret.
Since $\U < |\S|$, it is strictly better for $j$ not to leave the game in the 
up-stage. If $j$ sends incorrect messages in the up-stage, then even if he 
is not caught(which results in not learning the secret), this deviation will 
cause an incorrect value to be decoded instead of $s_t$.  This results in $j$
not learning the secret if $t$ happened to be the definitive round. Thus, 
$j$ has no incentive to deviate in the up-stage. 

Now what about the down-stage? If $j$ is on his last round of input, then he 
is a short player, and knows that the current round is definitive. At the same 
time, this means that he is an odd-labelled player and by the construction 
of the communication tree he cannot prevent anyone from learning the secret. 
Moreover, even if he successfully fakes a tag in order to convince the unique
long player below him that the game has not ended, that player will detect 
in the next round that all other players have left the game and will 
therefore still output the correct secret. Thus this deviation does not 
change the outcome of the game, namely that all players learn the secret. 
It follows that $j$ does not gain anything 
by this deviation. (Although he also does not lose anything by it.) 
Effectively, if $j$ is a short player on his 
last round of input, it is too late for him to improve his payoff by deviating.

If $j$ is not on his last round of input and is at an odd labelled node then, 
as remarked in the proof of Lemma~\ref{lem:guess}, he \emph{knows} the current 
round is not definitive, so cheating would be equivalent to randomly guessing 
the secret, which is correct with probability only $1/|S|$, This is worse than following the protocol by equation \eqref{eqn:su}.

Now suppose that the current round, $t$, is not $j$'s last round of input 
and $j$ is at an even labelled node.  Note that any spurious messages sent by Player $j$ to players that are not expecting them, will be ignored.  Also, any action involving not sending a message that is expected will be 
detected immediately, only by the involved players at first, but the knowledge will quickly 
propagates to all the players, before the end of the up-stage of the next 
round. Since detection of deviation causes other players to quit immediately, 
effectively such actions amount to player $j$ leaving the game.  
Thus, the possible deviations we need to analyze for player $j$ are:
\begin{itemize}
\item Leave the game, with or without sending fake messages first, and;
\item Send fake messages to one or both children and hope to stay in the 
game by not being caught.
\end{itemize}

Let $\Pjt$
be $j$'s estimate of the probability that the current round is definitive. 
Then by Lemma~\ref{lem:guess},  $\Pjt \le 2\beta$. 

If player $j$ leaves the game and outputs the value $s_t$ then the probability 
that he has output the right value is $\Pjt + (1-\Pjt)/|\S|$, and since he 
has left the game, he has no later opportunity to improve that probability.
By Lemma~\ref{lem:threshold}, in order to discourage this deviation it is 
sufficient if 
\begin{equation}\label{eqn:cheat1}
\Pjt + \frac{(1-\Pjt)}{|\S|} \le \frac{U_j-\Uminus_j}{\Uplus_j-\Uminus_j} 
\end{equation}

Now consider the other kind of deviation. Suppose instead of sending $s_{t}$ to 
his descendents, player $j$ sends a fake value to one or both of them. Let 
$\alpha$ be the probability that he is not caught. By 
Proposition~\ref{prop:hash} we know that $\alpha=\frac{1}{q-1}$ if he sends 
a fake message to only one of his children, and, since the two verification 
functions are chosen independently by the dealer, $\alpha = \frac{1}{(q-1)^2}$
if he fakes both messages.
If he is not caught, and if the current round is definitive, then
player $j$ has learned the true secret and has prevented some of his descendants from 
learning it.   If the current round was not definitive and his deviation 
was not detected, the game continues and since the values $s_i$ are all 
independent, it does not affect the next round.\footnote{This is why player $j$ does 
not try to fake the mask $m_{t+1}$ - an successfully transmitted incorrect 
$m_{t+1}$ will wreak havoc in the next round, since some players will be 
talking to the wrong players.} This means that player $j$ can either revert to 
following the protocol and guarantee learning the secret along with everyone 
else, or he may find further opportunities to cheat.

On the other hand, if the faked message is detected, which happens with probability 
$1-\alpha$, then the game ends right away and player $j$ 
outputs $s_t$. In this case, there is still a $\Pjt$ chance that the current 
round was definitive and an additional $(1-\Pjt)/|\S|$ chance that the 
value $s_t$ was correct despite the current round not having been definitive.
So the probability that the deviation succeeds is 
\[
\alpha + (1-\alpha)\left(\Pjt + \frac{(1-\Pjt)}{|\S|}\right).
\]
As this quantity is bigger when $\alpha= \frac{1}{q-1}$, faking only one 
message dominates faking both messages.
Thus, again by Lemma~\ref{lem:threshold}, to discourage this deviation, it is 
sufficient if
\begin{equation}\label{eqn:cheat2}
\frac{1}{q-1}+ \frac{q-2}{q-1}\left(\Pjt + \frac{(1-\Pjt)}{|\S|}\right)
\le \frac{U_j-\Uminus_j}{\Uplus_j-\Uminus_j}.
\end{equation}
Moreover, note that \eqref{eqn:cheat2} implies \eqref{eqn:cheat1}.
Thus for a Nash equilibrium, it is sufficient to show \eqref{eqn:cheat2}.

For the remainder of the proof, we are going to assume that $n$ is 
sufficiently large, specifically that $n \ge \frac{2\,\U |\S|}{|\S|-\U}$. 
We will discuss the modifications required when 
$3\le n <  \frac{2\,\U |\S|}{|\S|-\U}$ in Section~\ref{sec:smalln}. 

We have so far not specified $\beta$. We do this now.
Let 
\[
\beta = \frac{|\S|-\U}{4\,\U |\S|}
\]
Then $1/\beta =O(1)$ so that the expected number of rounds in the game 
is constant. 

To show \eqref{eqn:cheat2}, recall that $q> n$, so that $q-1 \ge n$.
We have
\begin{align*} 
\frac{1}{q-1}+ \frac{q-2}{q-1}\left(\Pjt + \frac{(1-\Pjt)}{|\S|}\right)
&< \frac{1}{n} + \left(\Pjt + \frac{(1-\Pjt)}{|\S|}\right)
\\&=\frac{1}{n} + \frac{|\S|-1)}{|\S|}\Pjt + \frac1{|\S|}
\\&\le \frac{|\S|-\U}{2\,\U |\S|} + 2\beta  + \frac1{|\S|}
\\&\le \frac{|\S|-\U}{2\,\U |\S|} + 2\frac{|\S|-\U}{4\,\U |\S|}  + 
\frac1{|\S|}
\\&= \frac{|\S|-\U}{\U |\S|} + \frac1{|\S|}
\\&= \frac1{\U}
\end{align*}
as desired.
\smallskip

Finally, we analyze the resource costs of our protocol.  The communication
tree has $2n-1$ nodes. In each round, each player is mapped to one leaf and 
one internal node.  Players only communicate with their neighbors in
the tree. So on each round, during the up-stage each player sends up to three 
messages: one to his parent when he is a leaf; one to his parent when he is 
an non-root internal node; and if he is a child of the root and $n$ is even, 
he has to send an additional message because there are two players at the 
root.

During the down stage each player sends two messages, one each 
to his two children. Thus each player sends at most five messages per round.
Each message consists of four elements of $\F$ (shares of $s_t$, $m_{t+1}$ and 
two tags) each of which is represented as $O(\log n)$, since $n<q\le 2n$
Thus each player send $O(\log n)$ bits per round. 
The expected number of rounds is $1/\beta$ which is constant 
and so each player sends only $O(\log n)$ bits 
during the course of the game. Finally since the tree has depth
$O(\log n)$ the number of rounds is constant (in expectation) and the 
communication is synchronous, it follows that the expected
latency is $O(\log n)$.
\end{proof}

\subsection{Some Remarks}
\subsubsection{The Case of a Small Number of Players}\label{sec:smalln}
When the number of players is a constant greater than $2$, then 
scalability is not an issue, and one might hope to simply use the algorithm of 
Kol and Naor~\cite{kol2008games} by simulating non-simultaneous broadcast 
channels with secure private channels. Unfortunately their algorithm only 
provides an $\epsilon$-Nash equilibrium, since the unique short player 
has a small chance of successfully pretending the game has not ended. 

In our algorithm, $\lceil n/2\rceil$ players are short players. In particular,
even for $n=3$, there are at least two short players, and none of the short 
players can increase their expected payoff by cheating alone. Thus, we 
obtain a Nash equilibrium, provided that we can prove inequality \eqref{eqn:cheat2}. 
The above proof used the fact that $n$ was at least 
$ \frac{2\,\U |\S|}{|\S|-\U}$, so we need a separate argument.

However  since when $n$ is constant, scalability is immediate, we have more leeway to 
choose a larger field to work with.\footnote{The upper bound of $O(n)$ on the field 
size came from the desire to keep $4\log q$, which is the size of an individual 
message, small.} So we can work in a prime field of size $q$ where 
\[
\max\{|\S|, \frac{2\,\U |\S|}{|\S|-\U}\} < q \le 
2 \max\{|\S|, \frac{2\,\U |\S|}{|\S|-\U}\}
\]
and the proof of  \eqref{eqn:cheat2} goes through as before, giving us a Nash 
equilibrium.

\subsubsection{Nash equilibria vs. Strict Nash equilibria}

An $n$-tuple of strategies is called a strict Nash equilibrium
if when all other players are following the prescribed strategy, 
a player unilaterally deviating  achieves a strictly worse expected payoff 
than he would by following the equilibrium strategy. 

Our algorithm fails to be a strict Nash equilibrium, for the following 
reasons:
\begin{itemize}
\item Any player may, at any time, send spurious messages that are not 
part of the protocol, to players that are not his neighbors
in the tree. Such messages will be ignored by their recipients, who are 
following the protocol. 
\item At the end of the definitive round, a short player may try to 
fake a tag and send a message to the long player below him. This may 
go undetected with some small probability, but as noted in the proof,
even in this case, it cannot fool that long player into outputting the 
wrong secret.
\end{itemize}
Our proof shows that our algorithm \emph{does} have the property that 
any player deviating from the protocol in one of the above ways does not increase his 
payoff, and moreover \emph{does not 
affect any other player's payoff either.} In other words, if a player 
deviating unilaterally from our protocol, does so in a manner that 
changes some other player's payoff, then he strictly reduces his own 
expected payoff. 
In this weaker sense, the equilibrium is strict. 

\subsubsection{A Note on Backwards Induction}\label{sec:backind}

The \emph{backwards induction} problem arises when a multi-round
protocol has a last round number that is known to all players.  In
particular, if it is globally known that the last round of the
protocol is $\ell$, then on the $\ell$-th round, there is no longer
any fear or reprisal to persuade a player to follow the protocol.  But
then if no player follows the protocol in the $\ell$-th round, then in
the $(\ell - 1)$-th round, there is no reason for any player to follow
the protocol.  This same logic continues backwards to the very first
round.

The backwards induction problem can occur with protocols that make
cryptographic assumptions, since there will always be some round
number, $\ell$, in which enough time has passed so that even a
computationally bounded player can break the cryptography.  Even
though $\ell$ may be far off in the future, it is globally known that
the protocol will end at round $\ell$, and so by backwards induction,
even in the first round, there is no incentive for a player to follow
the protocol.

As in~\cite{kol2008games}, we protect against backwards induction by
having both long and short players.  As the above analysis shows, if
$N$ is chosen sufficiently large, we can ensure that the probability
of making a correct guess as to when the protocol ends is too small to
enable profitable cheating for any player.  Thus, even when a player
gets to the second to the last element in all his lists, he can not be
very sure that the protocol will end in the next round.  All players
are aware of these probabilities at the beginning of the protocol, and
thus each player knows that no other player will be able to accurately
guess when the protocol ends.

\section{Algorithm for Case of Absent Players}
\label{s:mofn}

In this section we discuss $m$-out-of-$n$ secret sharing where $m
<n$. Here we want a subset of $m$ or more of the players to be able to
reconstruct the secret even when the remaining players are absent. 
However, fewer than $m$ players should \emph{not} be able to
reconstruct the secret on their own.

As discussed previously, it does
not seem possible to design \emph{scalable} algorithms for secret
sharing in the case when either $m$ is much smaller than $n$, or when the
subset of $m$ active players may be chosen in a completely
arbitrary manner.  We now address the situation where (1) $m < n$,
but $m = \Theta(n)$; and (2) the subset of active players does not
depend on the random bits of the dealer.  More precisely, we will
present an algorithm with parameters $\tau$ and $\lambda$ such that when $m > (\tau + \lambda)n$ then with high
probability the algorithm is an Nash equilibrium for the
active players, and all the active players learn the secret. On the
other hand, if $m < (\tau -\lambda)n$,
then with high probability the active players cannot reconstruct the
secret.

We will assume, as in previous work, that the set of active players is
known to all the active players, before the start of the
players' protocol. However, this set is not known to the dealer at the time of 
share creation.   The set of active players may be arbitrary or randomly 
chosen, but it does
not depend on the randomness used by the dealer for the algorithm.

Since the algorithm is a variant of the $n$-out-of-$n$ scheme, for
conciseness, we now describe only the places where the two algorithms
differ.  Let $c$ be a large constant, which we will specify later.
First, the dealer partitions the players into $Q= \frac{n}{c\log n}$
pairwise disjoint groups of $c\log n$ players each.\footnote{It is
  convenient to assume that $c\log n$ is an integer and divides
  $n$, and that the quotient $Q$ is odd. In fact it is always possible 
  to choose $c$ so that $Q$ is odd, and for arbitrary $n$, there will be 
  $Q =\left\lfloor \frac{n}{\lceil c\log n \rceil} \right\rfloor$ groups, each 
  with either $\lceil c\log n \rceil$ or $\lceil c\log n \rceil +1$
  players.}  This is done using a random permutation of the
players, where the first $c\log n$ players in the permutation are the first
group, the next $c\log n$ are the second group and so on. The groups
are labelled $1, 2,\dots, Q$. The communication tree used for the 
algorithm is the labelled $Q$-out-of-$Q$ tree, except that all of the 
nodes are \emph{supernodes}, in that they
correspond to groups of $c\log n$ players instead of single
players. An edge of this communication tree will correspond to all-to-all
communication between the active members of the groups at the endpoints 
of the edge.

As in Algorithm~\ref{alg:Dealer}, the dealer samples $X$ and $Y$ independently
from $G(\beta)$. (Here $\beta = (|\S|-\U)/4\,\U |\S|$ is the same parameter 
as in the $n$-out-of-$n$ algorithm.) For each round $t$, the dealer essentially 
implements the corresponding round of the $Q$-out-of-$Q$ algorithm.  For $t \neq X$, he picks 
random secret $s_t$ for the round.  If $t = X$, $s_X$ is the true secret.  The dealer then picks a random permutation $\pi_t \in S_Q$ to determine which group is 
assigned to which node for that round; and picks a random mask $m_{t+1}$ to encode 
positional data.  A key difference is that now the positional data  constitutes 
not only which node your group is at and which groups are you neighbors, but 
also which $c\log n$ players are in all the relevant groups. For this reason 
we still need to work in a field with size bigger than $n$; a field of size 
merely bigger than $Q$ is not enough.

The dealer then creates $Q$-out-of-$Q$ iterated shares for $s_t$ and $m_{t+1}$. Finally, the values of these 
$Q$-out-of-$Q$ iterated shares at the leaves are further encoded into 
$\tau c\log n$-out-of-$c\log n$ shares via Shaimr's scheme~\cite{shamir:how}. 
These are the shares that will be distributed to the players for round $t$.
This last step is to avoid all the players in a group having the same share,
because if that were the case, then the secret could be decoded by 
$Q= o(m) =o(n)$ players, one from each group. As before, the dealer, creates 
authentication data for all the messages to be sent in the algorithm\footnote{Technically this is not necessary. Since there is all-to-all 
communication between adjacent groups, messages sent by individuals can be 
checked against those sent by other members of the group.};
identifies the short players as the members of the groups at the odd labeled 
nodes in round $X$; truncates shares appropriately; and sends the shares to 
the players.

The players' protocol is very similar to Algorithm~\ref{alg:Player}. The only 
differences are that each active player must send his messages to all active 
players at neighboring nodes, and at the beginning of each round the active 
players at each leaf node must first reconstruct the value at the leaf using 
their $\tau c\log n$-out-of-$c\log n$ Shamir shares. This means that unless 
there are at least $\tau c\log n$ active players in each group, the algorithm 
will fail. Herein lies the reason for the $\delta$ failure probability of the 
algorithm.

Formal descriptions of the protocols for the dealer and players are presented 
as Algorithms~\ref{alg:mofnDealer} and~\ref{alg:mofnPlayer}.

\begin{algorithm}
{\small
\caption{Dealer's Protocol} \label{alg:mofnDealer} 
$\F$ field of size $q$ (to represent messages in the algorithm)
$n$ players with distinct identifiers in $[n] \hookrightarrow \F$,
$\beta \in (0, 1)$: geometric distribution parameter.
$\tau, \lambda, k$ threshold parameters.
\begin{enumerate}
\item Choose $X, Y,$ independently from a geometric distribution with 
parameter $\beta$. Round $X$ is the definitive one. Short players will receive 
full input for $X-1$ rounds and partial input for round $X$. Long players 
will receive full input for $X+Y-1$ rounds and partial input for round $X+Y$. 
\item Let $c =\frac{2(k+1)}{\lambda^2}$ and $Q= \frac{n}{c\log n}$
Choose a random permutation $\pi \in S_n$, and use it to divide players into 
$Q$ groups (numbered $1, 2, \dots, Q$) of size $c\log n$. Use the complete 
binary tree with $Q$ leaves described in Section~\ref{sec:commtree}.
\item For each round $t$ between 1 and $L=X+Y$:
\begin{itemize}
\item If $t<L$, choose a random permutation $\pi_t \in S_Q$. 
If $t=L$ choose a permutation $\pi_L$ which is random subject to the 
constraint that all the long player groups (determined by $\pi_X$) are assigned 
to odd labels under $\pi_L$. For round $t$ player $g$ will be assigned to 
all nodes marked $\pi_t(g)$ in the tree. If $t=1$, $m_1=0$ // (Otherwise 
$m_t$ was set in the previous round)
\item For every group $g$, use $\pi_t$ and $m_t$ to create masked positional 
data for $g$ for round $t$. Positional data consists of the group's position 
and members, neighboring groups and their members.
\item Choose a random mask $m_{t+1} \in \F$ (for the {\bf next} round.)
\item Create shares of $m_{t+1}$ by calling 
\recursiveShares$(\mbox{root}, m_{t+1})$.
\item If $t = X$ $s_t \leftarrow $ true secret.\\
 Otherwise, $s_t \leftarrow $ random element of $\S$
\item Create shares of $s_t$ by calling \recursiveShares$(\mbox{root}, s_t)$.
\item Create $\tau c \log n$-out-of-$c\log n$ Shamir shares of each of the
leaf values of the shares created by \recursiveShares (one for each player 
in the group corresponding to the leaf.)
\item Create authentication data.
\item For each $g$, for each player $j\in g$, $j$'s (full) input $I^j_t$ for 
round $t$  consists of positional data, Shamir shares of recursive shares of 
$m_{t+1}$ and $s_t$ corresponding to node $\pi_t(g)$, and authentication data.
Partial input $\tilde{I}^j_t$ consists of 
all of the above except the authentication tags for sending messages to your 
children (in the down-stage).
\end{itemize}
\item Identify the short players as those players $j$ who are in groups at odd 
numbered nodes in the definitive iteration, \ie, $\pi_X(j)$ is odd.
\item For each short player $j$, send $j$ the list 
$I^j_1,\dots I^j_{X-1}, \tilde{I}^j_X$.
\item For each long player $j$, send $j$ the list 
$I^j_1,\dots I^j_{L-1}, \tilde{I}^j_L$.
\end{enumerate}}
\end{algorithm}

%%%  Player Protocol   %%%
\begin{algorithm}
{\small 
\caption{Protocol for Player $j$} \label{alg:mofnPlayer} 
S=0; M=0 \\
If at any time you receive spurious messages (messages not expected uder the 
protocol), ignore them. \\
On round $t$:
\begin{itemize}
\item[] \upStage:
\begin{enumerate}
\item $m_t = M$
\item Use $m_t$ to unmask and discover your position in the 
tree and the identities of your group members and neighbors for round $t$.
\item (as player at a leaf)
Send your Shamir share to all active members of your group. Receive Shamir 
shares from all active members of your group. If insufficient shares received, 
output $S$ and quit. Otherwise, reconstruct the leaf values of the recursive  
shares of $s_t$ and $m_{t+1}$  and send them along with their tags to all 
active members of the group at your 
parent node in the tree.
\item (as player at an internal node) 
\begin{enumerate}
\item Receive copies of (intermediate) shares of $s_t$ and $m_{t+1}$ 
and tags from active members of the groups at your left and right children 
nodes. Check that correct messages have been sent.
If a fault is detected (missing or incorrect message) output $S$ and quit.
\item For each of $s_t$ and $m_{t+1}$: interpolate a degree 1 polynomial $f$ 
from $(-1, \mbox{ left-share})$ and $(1, \mbox{ right-share})$. 
Evaluate $f(0)$. This is your share.
\item If you are not at the root, send the above reconstructed shares of $s_t$ 
and $m_{t+1}$ to all active members of the group at your parent node. If you 
\emph{are} at the root, these shares are the actual values of $s_t$ and 
$m_{t+1}$.
\end{enumerate}
\end{enumerate}
\item[] \downStage:
\begin{enumerate}
\item If you are at the root, set $S=s_t$ and $M=m_{t+1}$ and send these 
values along with authentification tags to all active members of the groups at
your left and right children nodes.
\item Else
\begin{enumerate}
\item (as a non-root internal node) Receive copies of $s_t$ and $m_{t+1}$ 
and tags from all active members of the group at 
your parent node and check them. If fault detected, 
output $S$ and quit.
\item Set $S=s_t$ and $M=m_{t+1}$.
\item Send $s_t$ and $m_{t+1}$ to all active members of the groups at your 
children nodes.
If you are a short player and have no authentication tags, output $s_t$ and 
quit.
\end{enumerate}
\end{enumerate}
\item[] $t \leftarrow t+1$
\end{itemize}
}
\end{algorithm}

\subsection{Analysis}

In order to show correctness of the algorithm, we need to make two
arguments. The first argument is that the active players are well
distributed among the groups, so that at any 
stage of the players' protocol
sufficiently many shares are available to do the desired
reconstruction. The second argument is that, the protocol is an
Nash equilibrium for the active players. The proof of this
part is essentially identical to the proof that the $n$-out-of-$n$
protocol was a Nash equilibrium for all the players, and
we omit it here. In the remainder of this section, we sketch why
reconstruction is possible with high probability, despite absent
players.

We note that in any round, for the value at an internal node to be 
reconstructed it is necessary that the values of both of its children be
received. Therefore, it is necessary that there is at least one active
player at each internal node.  Since no other reconstruction is to be
done at internal nodes, this is also sufficient. However, note that
the group of $c\log n$ players assigned to an internal node is the
same group as those assigned to some leaf node. Hence if there are no
active players at some internal node, then there is a leaf node at
which there are also no active players. Since at the leaf nodes we
will have a more stringent requirement for how many players need to be
active, it is sufficient to consider the failure of the algorithm at
the leaf nodes.

Now, the value at a leaf node is distributed as $\tau c\log
n$-out-of-$c\log n$ Shamir shares to the $c\log n$ players associated
with that leaf node. Thus, this value can be reconstructed if and only
if there are at least $\tau c\log n$ active players at the leaf
node. Moreover, in order for the protocol to succeed, the values at
all the leaf nodes must be reconstructible.

Recall that the players are assigned to leaf nodes by their group number 
so all leaf node values are reconstructible in a particular round if and 
only if all the groups 
have  at least $\tau c\log n$ active players in that round. Moreover, 
since the same groups are used for all rounds, though not the same 
assignment of them to leaves, and each player is either active or absent for 
the entire game, the following lemma holds.
\begin{lemma}\label{lem:lriffg}
All the leaves are reconstructible throughout the algorithm if and only if 
all the groups have  at least $\tau c\log n$ active players.
\end{lemma}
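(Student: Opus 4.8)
Lemma~\ref{lem:lriffg} is essentially a bookkeeping statement, so the plan is to unwind the definitions and argue both implications directly, carefully tracking the distinction between ``a fixed round'' and ``throughout the algorithm.'' The one subtlety to get right is that the \emph{partition} of players into the $Q$ groups is fixed once and for all by the dealer's initial permutation $\pi$, whereas the assignment of groups to leaves (and internal nodes) changes from round to round via the permutations $\pi_t$. Thus the \emph{set} of active players inside any given group is the same in every round, even though which leaf that group sits at varies. This is exactly the observation that lets a statement about all rounds collapse to a statement about all groups.

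First I would recall, from the paragraph immediately preceding the lemma, the characterization established there: in any fixed round $t$, all leaf node values are reconstructible in round $t$ if and only if every group assigned to a leaf in round $t$ has at least $\tau c\log n$ active players. Since in the $m$-out-of-$n$ algorithm every group is assigned to exactly one leaf each round (the $Q$ groups biject with the $Q$ leaves under $\pi_t$), this says: round $t$'s leaves are all reconstructible iff all $Q$ groups have at least $\tau c\log n$ active players. Crucially, the right-hand condition no longer mentions $t$ at all, because group membership is round-independent and ``active'' is a per-player property fixed for the whole game (each player is either active or absent for the entire execution, as stated in the setup).

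Now for the two directions. For the ($\Leftarrow$) direction: if all $Q$ groups have at least $\tau c\log n$ active players, then by the per-round characterization, for \emph{every} round $t$ all leaves are reconstructible in round $t$; hence all leaves are reconstructible throughout the algorithm. For the ($\Rightarrow$) direction: suppose some group $g$ has fewer than $\tau c\log n$ active players. Pick any round $t$ (say $t=1$); in that round $g$ is assigned to some leaf $\ell = \pi_t^{-1}$ applied appropriately, and that leaf's $\tau c\log n$-out-of-$c\log n$ Shamir value cannot be reconstructed because fewer than $\tau c\log n$ active players hold shares for it. So the leaves are not all reconstructible in round $t$, hence not reconstructible throughout the algorithm. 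This establishes the contrapositive of ($\Rightarrow$).

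I do not anticipate a genuine obstacle here; the ``hard part,'' such as it is, is purely expository: making the reader see that the quantifier over rounds can be pulled out because (i) the group partition is a fixed object, (ii) activity status is fixed per player for the whole game, and (iii) each round's permutation $\pi_t$ is a bijection between groups and leaves, so ``all leaves reconstructible in round $t$'' is literally ``all groups are big enough'' with no residual dependence on $t$. The actual reconstructibility-at-a-leaf criterion (exactly $\tau c\log n$ active players needed, following from the threshold of the Shamir scheme) and the internal-node observation were already argued in the preceding text, so the proof of the lemma itself should be only a few lines quoting those facts and performing the quantifier manipulation.
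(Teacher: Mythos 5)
Your proposal is correct and matches the paper's own (very brief) justification: the paper likewise reduces the lemma to the per-round observation that a leaf is reconstructible iff its group has at least $\tau c\log n$ active players, and then notes that the group partition and each player's active/absent status are fixed for the whole game, so the condition is independent of the round. Your added care with the two directions and the bijection $\pi_t$ between groups and leaves is just a more explicit write-up of the same argument.
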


Now recall that the players are assigned to groups by the 
following random process.  The dealer chooses a random permutation of the $n$
players, and partition the players into $Q= \frac{n}{c\log n}$ 
pairwise disjoint groups by choosing successive contiguous blocks of 
length $c\log n$ in the permutation. Moreover, the choice of which 
players are active is made \emph{independently} of the above process. 
It turns out that in this case, the number of active players in a 
fixed block (\ie \, group) is tightly
concentrated around its mean. A more precise statement is in the
following lemma, whose proof is a simple application of the Azuma-Hoeffding
inequality. We omit the details here.

\begin{lemma}\label{lem:conc}
Let $b_1,b_2, \dots, b_n$ be $n$ bits such that exactly $m$ of them
are $1$ and $n-m$ of them are $0$. Let $\sigma \in S_n$ be a random
permutation of $n$ symbols and $b_{\sigma(1)},b_{\sigma(2)}, \dots,
b_{\sigma(n)}$ be the induced permutation on bits. Fix any
contiguous block $b_{\sigma(i+1)},b_{\sigma(i+2)}, \dots, b_{\sigma(i+
  c\log n)}$ of length $c\log n$, and let random variable $Z$ denote 
the number of bits in the block which are $1$. 
Then $Z$ has expectation $\frac{mc\log n}{n} $ and satisfies the following 
concentration inequalities:
\[
\Pr\left(Z- \frac{mc\log n}{n} > \lambda c\log n\right) \le
e^{-(\lambda^2 c\log n)/2}
\]
and
\[
\Pr\left(Z- \frac{mc\log n}{n} < -\lambda c\log n\right) \le
e^{-(\lambda^2 c\log n)/2}
\]
\end{lemma}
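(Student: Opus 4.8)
The plan is to prove Lemma~\ref{lem:conc} by a direct application of the Azuma--Hoeffding inequality to an appropriate martingale built from the random permutation $\sigma$. The expectation is easy: by linearity, $\mathbb{E}[Z] = \sum_{\ell=1}^{c\log n} \Pr(b_{\sigma(i+\ell)} = 1)$, and since $\sigma$ is a uniformly random permutation, each $b_{\sigma(i+\ell)}$ is marginally a uniformly random one of the $n$ bits, so $\Pr(b_{\sigma(i+\ell)} = 1) = m/n$; summing gives $\mathbb{E}[Z] = mc\log n / n$.

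For the concentration, I would reveal the values $b_{\sigma(1)}, b_{\sigma(2)}, \dots$ one position at a time and define $Z_k = \mathbb{E}[Z \mid b_{\sigma(1)}, \dots, b_{\sigma(k)}]$ for $k = 0, 1, \dots, n$, so that $Z_0 = \mathbb{E}[Z]$ and $Z_n = Z$. This is a Doob martingale. The key structural observation is that revealing one more position $b_{\sigma(k)}$ changes the conditional expectation of $Z$ by at most $1$: intuitively, swapping the bit that lands in position $k$ with some other bit can move at most one unit of ``mass'' into or out of the block of interest. More carefully, one can couple two completions of the permutation that agree on the first $k-1$ positions; they differ by a transposition-like rearrangement of the remaining symbols, which changes the number of $1$'s in the fixed block by at most $1$. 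Hence the martingale differences satisfy $|Z_k - Z_{k-1}| \le 1$ for all $k$. Actually, since only the $c\log n$ positions inside the block matter, at most $c\log n$ of the differences are nonzero, but for the stated bound it suffices to use the crude bound that all $n$ differences are at most $1$ --- wait, that would give an exponent of order $(\lambda c\log n)^2 / n$, which is too weak. So I do need the sharper accounting: the block has only $c \log n$ relevant positions, and the Doob martingale difference $Z_k - Z_{k-1}$ is zero once the outcome of position $k$ can no longer affect which bits land in the block --- equivalently, I should set up the martingale by revealing only the $c\log n$ block positions first (or use the standard fact that for a function depending on a random permutation through a size-$r$ window, Azuma applies with $r$ bounded differences of size $O(1)$). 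With $r = c\log n$ bounded differences each of magnitude at most $1$, Azuma--Hoeffding gives $\Pr(Z - \mathbb{E}[Z] > \lambda c\log n) \le \exp(-(\lambda c\log n)^2 / (2 c\log n)) = \exp(-\lambda^2 c\log n / 2)$, and symmetrically for the lower tail.

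The main obstacle is getting the bounded-difference constant exactly right so that the exponent comes out as $\lambda^2 c\log n / 2$ rather than something with an extra constant factor. This requires being careful that (i) the martingale has exactly (effectively) $c\log n$ steps with nonzero differences, and (ii) each such difference is bounded by $1$ and not, say, $2$. The cleanest route is to appeal to the permutation version of McDiarmid's bounded-differences inequality (or equivalently Azuma for the Doob martingale on the sequence $\sigma(i+1), \dots, \sigma(i+c\log n)$ revealed one at a time, noting each revelation changes $\mathbb{E}[Z \mid \cdot]$ by at most $1$ since it fixes one block-bit and redistributes the rest). I would state this as the ``simple application of Azuma--Hoeffding'' the excerpt promises and carry out the one-line difference bound, then invoke the inequality. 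Both tail bounds follow identically by symmetry (applying the argument to $Z$ and to $c\log n - Z$, or directly to both tails of Azuma), completing the proof.
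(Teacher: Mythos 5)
Your proof is correct and matches the approach the paper intends: the paper omits the details, stating only that the lemma follows from ``a simple application of the Azuma--Hoeffding inequality,'' and your Doob martingale over the $c\log n$ block positions, with the swap-coupling argument showing each difference is at most $1$, is exactly that application and yields the stated exponent $\lambda^2 c\log n/2$. Your mid-proof self-correction (rejecting the $n$-step martingale in favor of revealing only the block positions) lands on the right setup, so no gap remains.
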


We apply the above concentration inequality to show the following.

\begin{lemma}\label{lem:m}
Let $k\ge 1$, and let $m$ denote the number of active players. If all  
the active players follow the algorithm, 
then with probability at least $1- \frac{1}{n^k}$,
\begin{itemize}
\item If $m \geq (\tau + \lambda)n$, all active players learn the secret.
%groups contain at least $\tau c \log n$ players.
\item If $m \leq (\tau - \lambda)n$, the secret cannot be reconstructed.
%there is a group without $\tau c \log n$ players.
\end{itemize}
\end{lemma}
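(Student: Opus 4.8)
The plan is to reduce both claims to the concentration bound in Lemma~\ref{lem:conc} via a union bound over the $Q$ groups. First I would handle the positive direction: suppose $m \ge (\tau+\lambda)n$, so that the expected number of active players in any fixed group is $\frac{mc\log n}{n} \ge (\tau+\lambda)c\log n$. For the reconstruction to go through in every round, Lemma~\ref{lem:lriffg} tells us it is necessary and sufficient that every one of the $Q$ groups contains at least $\tau c\log n$ active players. For a fixed group, the deficiency event ``fewer than $\tau c\log n$ active players'' is contained in the event $Z - \frac{mc\log n}{n} < -\lambda c\log n$, which by Lemma~\ref{lem:conc} has probability at most $e^{-(\lambda^2 c\log n)/2}$. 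Taking a union bound over the $Q \le n$ groups, the probability that some group is deficient is at most $n\, e^{-(\lambda^2 c\log n)/2}$. Plugging in $c = \frac{2(k+1)}{\lambda^2}$ gives $e^{-(\lambda^2 c\log n)/2} = e^{-(k+1)\log n} = n^{-(k+1)}$, so the failure probability is at most $n \cdot n^{-(k+1)} = n^{-k}$. Hence with probability at least $1 - \frac1{n^k}$ every group has enough active players, every leaf (and hence every internal node) is reconstructible in every round, and since all active players follow the protocol, the argument from Theorem~\ref{thm:main} shows the true secret is reconstructed at the root in the definitive round and propagates down to all active players.

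For the negative direction, suppose $m \le (\tau-\lambda)n$. Here I would argue that the secret cannot be recovered because some group will have strictly fewer than $\tau c\log n$ active players — in fact, with the stated probability, \emph{every} group will. The expected number of active players in a fixed group is now at most $(\tau-\lambda)c\log n$, and the event ``at least $\tau c\log n$ active players'' is contained in $Z - \frac{mc\log n}{n} > \lambda c\log n$, which by Lemma~\ref{lem:conc} has probability at most $e^{-(\lambda^2 c\log n)/2} = n^{-(k+1)}$. A union bound over the $Q \le n$ groups shows that with probability at least $1-\frac1{n^k}$, every group has fewer than $\tau c\log n$ active players. In that case the $\tau c\log n$-out-of-$c\log n$ Shamir sharing at each leaf cannot be reconstructed from the available shares, so by the perfect secrecy of Shamir's scheme the leaf values reveal no information, and by Lemma~\ref{lem:shares} the value at the root — the secret in the definitive round — is information-theoretically hidden. (One could even get away with just one deficient group, but the union-bound version gives the clean $1/n^k$ bound and shows the obstruction is pervasive.)

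Finally I would combine the two directions: since each failure event has probability at most $\frac1{n^k}$, in each regime the desired conclusion holds with probability at least $1-\frac1{n^k}$, which is exactly the statement. The only mildly delicate point — and the step I expect to require the most care — is making sure the concentration bound is applied to the correct ``bad'' event in each direction and that the constant $c = \frac{2(k+1)}{\lambda^2}$ is chosen so that the exponent absorbs the extra $\log n$ coming from the union bound over $Q$ groups; everything else is routine. I would also note in passing that the choice of $c$ forces $Q = \Theta(n/\log n)$ and $c\log n = \Theta(\log n)$, consistent with the resource bounds claimed in Theorem~\ref{thm:mofn}, and that the footnote's rounding adjustments (to make $c\log n$ an integer dividing $n$ with $Q$ odd) do not affect the asymptotics or the concentration argument.
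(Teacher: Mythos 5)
Your proof is correct and follows essentially the same route as the paper: both directions reduce to the concentration bound of Lemma~\ref{lem:conc} (via Lemma~\ref{lem:lriffg}) with the same choice $c = \frac{2(k+1)}{\lambda^2}$, and the positive direction uses the identical union bound over the $Q$ groups. The only immaterial difference is in the negative direction, where the paper notes that a single deficient group already breaks reconstruction and so dispenses with the union bound (obtaining $1-1/n^{k+1}$), whereas you union-bound to show every group is deficient; both yield the claimed $1-\frac{1}{n^k}$.
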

\begin{proof}
As already remarked in Lemma~\ref{lem:lriffg} all the leaf values of the 
iterated shares of the secret can be reconstructed every round if and only if 
each group has at least $\tau c \log n$ players. Also, by 
Lemma~\ref{lem:shares} the iterated shares can be decoded into the secret
if and only if all the shares at the leaves are available. Thus, provided the  
active players  follow the protocol, the secret is recovered if and only if
each group has at least $\tau c \log n$ players, so that is what we 
need to prove.

Imagine the permutation of players is a bit string such that every bit
corresponds to one player. If a bit is 1 it means the corresponding
player is active and if a bit is 0 the corresponding player is
inactive.  There are two cases.\\

\smallskip
\noindent
Case (1) $m \ge (\tau +\lambda)n$.  Consider a particular group
$g$ and let $Z_g$ denote the number of active players at
$g$. By Lemma~\ref{lem:conc},
\begin{align*}
\Pr(Z_g < \tau c\log n) &= \Pr\left(Z_g -\frac{mc\log n}{n}
< \left(\tau -\frac{m}{n}\right) c\log n\right) 
\\ &\le \Pr\left(
Z_g -\frac{mc\log n}{n} < -\lambda c\log n \right) \\ &\le
e^{-(\lambda^2 c\log n)/2}
\end{align*}
Taking a union bound over all the
$\frac{n}{c\log n}$ groups we see that the probability that the
algorithm fails to recover the secret, which happens only when  
\emph{some} group does not have enough active players, is at most 
$ne^{-(\lambda^2 c\log n)/2} =
n^{1-\lambda^2c/2}$. Setting $c = \frac{2(k+1)}{\lambda^2}$ we see
that the probability that the algorithm fails is at most $1/n^k$.\\

\smallskip
\noindent
Case (2) $m \le (\tau -\lambda)n$.  Once again consider a particular group
$g$ and let $Z_g$ denote the number of active players at $g$.
 By Lemma~\ref{lem:conc},
\begin{align*}
\Pr(Z_g > \tau c\log n) &= \Pr\left(Z_g -\frac{mc\log n}{n}
> \left(\tau -\frac{m}{n}\right) c\log n\right) \\ &\le \Pr\left(
Z_g -\frac{mc\log n}{n} > \lambda c\log n \right) \\ &\le
e^{-(\lambda^2 c\log n)/2}
\end{align*}
Therefore with probability at least $1- e^{-(\lambda^2 c\log n)/2} =
1- 1/n^{k+1}$, $g$ does not have enough active players. Since the failure of 
a single group to have enough active players is sufficient to break the
protocol completely (by Lemma~\ref{lem:shares}) it follows that with
probability at least $1-1/n^{k+1} \ge 1-1/n^k$ the algorithm fails to recover 
the secret.
\end{proof}

Why do players follow the protocol? As in the $n$-out-of-$n$ case, we can 
argue that a player looking at his remaining input has a very low estimate 
of the current round being definitive, unless he is actually on his last round. 
In his last round of input, there is nothing he can do to prevent others 
from learning the secret that would not also prevent himself from learning 
it. We omit the details, which are essentially the same as in the proofs of 
Lemma~\ref{lem:guess} and Theorem~\ref{thm:main}. We conclude with

\begin{proof}[Proof of Theorem~\ref{thm:mofn}]
Lemma~\ref{lem:m} shows that with high probability, when the fraction of 
active players is higher than $\tau + \lambda$, all active players following 
the algorithm leads to their all learning the secret, while if the 
fraction is less than $\tau-\lambda$ then nobody learns it. We have also 
remarked that no player can do better by  deviating from the protocol, so 
that it is a Nash equilibrum. 
The communication tree has depth $\log Q = \log n -\log\log n -\log c 
= O(\log n)$. 
On each round, each player sends messages to $\Theta(\log n)$ others, and 
the messages  themselves are $\log q  = O(\log n)$ bits. The algorithm runs 
for $1/\beta =O(1)$ steps in expectation. It follows that the algorithm has 
latency $O(\log n)$ and each player sends $O(\log^2 n)$ bits in expectation.
\end{proof}

\section{Conclusion}
\label{s:conclusion}

We have presented \emph{scalable} mechanisms for rational secret sharing problems.  Our algorithms are scalable in the sense that the number of bits sent by each player is $O(\log n)$ and the latency is at most logarithmic in the number of players.  For $n$-out-of-$n$ rational secret sharing, we give a scalable algorithm that is a Nash equilibrium to solve this problem.  For $m$-out-of-$n$ rational secret sharing where (1) $m = \Theta(n)$; and (2) the set of active players is chosen independently of the random bits of the dealer, we give a scalable algorithm with threshold parameter $\tau$ that is a Nash equilibrium and ensures that for any fixed, positive $\lambda$ that if (1) at least a $m/n > \tau + \lambda$ fraction of the players are active, all players will learn the secret; and (2) if fewer than a $\tau -\lambda$ fraction of the players are active, then the secret can not be recovered. 

Several open problems remain.  First, while our algorithms lead to a $\Theta(n)$ multiplicative reduction in communication costs for rational secure multiparty computation (SMPC), the overall bandwidth for this problem is still very high. We ask: Can we design scalable algorithms for rational SMPC?    This is related to our second open problem which is: Can we design scalable algorithms for simulating a class of well-motivated mediators?  In some sense, this problem may be harder than the SMPC problem, since some types of mediators offer different advice to different players.  In other ways, the problem is easier: a simple global coin toss is an effective mediator for many games.  A final important problem is: Can we design coalition-resistant scalable algorithms for rational secret sharing?

\subsection*{Acknowledgments}

We are grateful to Tom Hayes and Jonathan Katz for useful discussions.

\bibliographystyle{plain}
\bibliography{security}

\begin{thebibliography}{10}

\bibitem{abraham2006distributed}
I.~Abraham, D.~Dolev, R.~Gonen, and J.~Halpern.
\newblock {Distributed computing meets game theory: robust mechanisms for
  rational secret sharing and multiparty computation}.
\newblock In {\em Proceedings of the twenty-fifth annual ACM symposium on
  Principles of distributed computing}, pages 53--62. ACM, 2006.

\bibitem{RSSpodc11}
V.~Dani, M.~Movahedi, Y~Rodriguez, and J.~Saia.
\newblock {Scalable Rational Secret Sharing}.
\newblock In {\em Proceedings of the thirtieth annual ACM symposium on
  Principles of distributed computing}. ACM, 2011.

\bibitem{geambasu2009vanish}
R.~Geambasu, T.~Kohno, A.A. Levy, and H.M. Levy.
\newblock {Vanish: Increasing data privacy with self-destructing data}.
\newblock In {\em Proceedings of the 18th conference on USENIX security
  symposium}, pages 299--316. USENIX Association, 2009.

\bibitem{gordon2006rational}
S.~Gordon and J.~Katz.
\newblock {Rational secret sharing, revisited}.
\newblock {\em Security and Cryptography for Networks}, pages 229--241, 2006.

\bibitem{halpern2004rational}
J.~Halpern and V.~Teague.
\newblock {Rational secret sharing and multiparty computation: extended
  abstract}.
\newblock In {\em Proceedings of the thirty-sixth annual ACM symposium on
  Theory of computing}, page 632. ACM, 2004.

\bibitem{king2010breaking}
V.~King and J.~Saia.
\newblock {Breaking the O (n 2) bit barrier: scalable byzantine agreement with
  an adaptive adversary}.
\newblock In {\em Proceeding of the 29th ACM SIGACT-SIGOPS symposium on
  Principles of distributed computing}, pages 420--429. ACM, 2010.

\bibitem{kol2008games}
G.~Kol and M.~Naor.
\newblock {Games for exchanging information}.
\newblock In {\em Proceedings of the 40th annual ACM symposium on Theory of
  computing}, pages 423--432. ACM, 2008.

\bibitem{lysyanskaya2006rationality}
A.~Lysyanskaya and N.~Triandopoulos.
\newblock {Rationality and adversarial behavior in multi-party computation}.
\newblock {\em Advances in Cryptology-CRYPTO 2006}, pages 180--197, 2006.

\bibitem{rabin1989verifiable}
T.~Rabin and M.~Ben-Or.
\newblock {Verifiable secret sharing and multiparty protocols with honest
  majority}.
\newblock In {\em Proceedings of the twenty-first annual ACM symposium on
  Theory of computing}, pages 73--85. ACM, 1989.

\bibitem{shamir:how}
Adi Shamir.
\newblock How to share a secret.
\newblock {\em Communications of the ACM (CACM)}, 22(11):612--613, 1979.

\bibitem{wegman1981new}
M.N. Wegman and J.L. Carter.
\newblock {New hash functions and their use in authentication and set
  equality}.
\newblock {\em Journal of computer and system sciences}, 22(3):265--279, 1981.

\end{thebibliography}
\end{document}